\theoremstyle{plain}
\newtheorem{theorem}{Theorem}
\newtheorem{lemma}[theorem]{Lemma}
\newtheorem{proposition}[theorem]{Proposition}
\newtheorem{corollary}[theorem]{Corollary}
\newenvironment{proofof}[1]{ {\noindent \em Proof of #1.}\/}{\hfill\qedsymbol\bigskip}
\newenvironment{proofsketch}{ {\noindent \em Proof sketch.}\/}{\hfill\qedsymbol\bigskip}
\newcommand{\remove}[1]{}
\newcommand{\suppress}[1]{}
\newcommand{\NN}{{\mathbb{N}}}
\newcommand{\RR}{{\mathbb{R}}}
\newcommand{\eps}{\epsilon}
\DeclareMathOperator{\cost}{cost}
\DeclareMathOperator{\floor}{floor}
\DeclareMathOperator{\card}{card}
\DeclareMathOperator{\cm}{cm}
\DeclareMathOperator{\ctr}{ctr}
\DeclareMathOperator{\activ}{active}
\DeclareMathOperator{\pclusters}{preclusters}
\DeclareMathOperator{\mclusters}{metaclusters}
\DeclareMathOperator{\clusters}{clusters}
\DeclareMathOperator{\opt}{opt}
\DeclareMathOperator{\last}{last}
\DeclareMathOperator{\poly}{poly}
\DeclareMathOperator{\dist}{dist}
\begin{document}

\title{Min-Sum Clustering (with Outliers)}

\author{Sandip Banerjee\thanks{The Hebrew University of Jerusalem, {\tt sandip.ndp@gmail.com}. Research supported in part by Yuval Rabani's NSFC-ISF grant 2553-17.} \and
Rafail Ostrovsky\thanks{University of California, Los Angeles, {\tt rafail@cs.ucla.edu}. Supported in part by DARPA under Cooperative Agreement No: HR0011-20-2-0025, NSF Grant  CNS-2001096, US-Israel BSF grant  2015782, Google Faculty Award, JP Morgan Faculty Award, IBM Faculty Research Award, Xerox Faculty Research Award, OKAWA Foundation Research Award, B. John Garrick Foundation Award, Teradata Research Award, and Lockheed-Martin Corporation Research Award. The views and conclusions contained herein are those of the authors and should not be interpreted as necessarily representing the official policies, either expressed or implied, of DARPA, the Department of Defense, or the U.S. Government. The U.S. Government is authorized to reproduce and distribute reprints for governmental purposes not withstanding any copyright annotation therein.} \and
Yuval Rabani\thanks{The Hebrew University of Jerusalem, {\tt yrabani@cs.huji.ac.il}. Research supported in part by NSFC-ISF grant 2553-17 and by NSF-BSF grant 2018687.}
}

\maketitle

\begin{abstract}
We give a constant factor polynomial time pseudo-approximation algorithm for min-sum 
clustering with or without outliers. The algorithm is allowed to exclude an arbitrarily small
constant fraction of the points. For instance, we show how to compute a solution that 
clusters 98\%  of the input data points and pays no more than a constant factor times the 
optimal solution that clusters 99\% of the input data points. More generally, we give the
following bicriteria approximation: For any $\eps > 0$, for any instance with $n$ input points
and for any positive integer $n'\le n$, we compute in polynomial time a clustering of at 
least $(1-\eps) n'$ points of cost at most a constant factor greater than the optimal cost 
of clustering $n'$ points. The approximation guarantee grows with $\frac{1}{\eps}$. Our 
results apply to instances of points in real space endowed with squared Euclidean distance, 
as well as to points in a metric space, where the number of clusters, and also the dimension 
if relevant, is arbitrary (part of the input, not an absolute constant).
\end{abstract}

\thispagestyle{empty}
\newpage
\setcounter{page}{1}

\section{Introduction}

We consider min-sum $k$-clustering. This is the problem of partitioning an input 
dataset of $n$ points into $k$ clusters with the objective of minimizing the sum 
of intra-cluster pairwise distances. We consider primarily the prevalent setting of 
instances of points in $\RR^d$ endowed with a distance function equal to the
squared Euclidean distance (henceforth refered to as the $\ell_2^2$ case). Our 
results apply also to the case of instances of points endowed with an explicit 
metric (henceforth refered to as the metric case). Note that we consider $k$ 
(and $d$, if relevant) to be part of the input, rather than an absolute constant. 
In these and
similar cases we give polynomial time approximation algorithms that cluster all
but a negligible constant fraction of outliers at a cost that is at most a constant
factor larger than the optimum clustering. More specifically, for any $\eps > 0$,
if the optimum we compete against is required to cluster any number $n'\le n$
of points, our algorithm clusters at least $(1-\eps) n'$ points and at most $n'$
points, and pays a constant factor more than the optimum for $n'$ points. The
constant depends on $\eps$.

Clustering in general is a fundamental question in unsupervised learning. The
question originated in the social sciences and now is widely applicable in data
analysis and machine learning, in areas including bioinformatics, computer
vision, pattern recognition, signal processing, fraud/spam/fake news filtering, 
and market/population segmentation. Clustering is also a list of fundamental 
discrete optimization problems in computational geometry that have been 
studied for decades by theoreticians, in particular (but not exclusively) as 
simple non-convex targets of machine learning. Some clustering problems, 
notably centroid-based criteria such as $k$-means, have been studied 
extensively. We currently have a fairly tight analysis of their complexity in the 
worst case (e.g.~\cite{ADHP09,MNV09,ANSW17,CK19}) and under a wide 
range of restrictive conditions: low dimension (e.g., ~\cite{HK07,CKM16,FRS16})
fixed $k$ (e.g.~\cite{KSS05,FMS07,Che09,FL11}), various notions of stability 
(e.g.~\cite{ORSS06,ABS10,KK10,AMR11,CS17}), restrictive models of
computation (e.g.,~\cite{BO07,AJM09,BMORST11,BFL16}), etc., as well as 
practically appealing heuristics (e.g., Lloyd's 
iteration, local search) and supportive theoretical justification 
(e.g., some of the afore-mentioned papers and also~\cite{AV07,JG12}).

Theoretical understanding of density-based clustering criteria, and in particular
min-sum clustering, is far less developed. There are clearly cases in practice
where, for instance, min-sum clustering coincides far better with the intuitive
clustering objective than standard centroid-based criteria. A simple illustrative
example is the case of separating two concentric dense rings of points in the 
plane. Moreover, min-sum clustering satisfies Kleinberg's consistency axiom,
whereas a fairly large class of centroid-based criteria including $k$-means
and $k$-median do not satisfy this axiom~\cite{Kle03,ZB09}.

However, the state-of-the-art for computing min-sum clustering remains
inferior to alternatives. Min-sum $k$-clustering is NP-hard in the $\ell_2^2$
case (e.g., using arguments from~\cite{ADHP09}), and also for the metric
case (see~\cite{FK01}), even for $k=2$. In the $\ell_2^2$ case, it can be 
solved in polynomial time if both $d$ and $k$ are absolute constants~\cite{IKI94}. 
In the metric case with arbitrary $k$, approximating min-sum clustering to within 
a factor better than $1.415$ is NP-hard~\cite{GI03,CKL19}. If $k$ is a fixed constant, 
the problem admits a PTAS, both in the $\ell_2^2$ case and in the metric case~\cite{FKKR03}; 
see also~\cite{GH98,Ind99,Mat00,Sch00} for previous work in this vein. If $k=o(\log n/\log\log n)$, 
then there is a constant factor approximation algorithm for the $\ell_2^2$
case~\cite{CS07}. In the metric case, assuming that $k=o(\log n/\log\log n)$
and the instance satisfies a certain clusterability/stability condition, a partition
close to optimal can be computed in polynomial time~\cite{BBG09,BB09} (see
also~\cite{VBRTX11} for some applications and experimental results in this vein).
We note that practical applications often require $k$ in the thousands, so the
above restrictions on $k$ are unrealistic in those cases.

In the worst case, and under no restrictions on the instance, the best known 
approximation guarantee known is an $O(\log n)$ approximation 
algorithm~\cite{BFSS15} for the metric case. This improves upon a slightly 
worse and much earlier guarantee~\cite{BCR01}. In both papers, the factor 
is derived from representing the input metric space approximately as a convex 
combination of hierarchically separeted tree (HST) metrics~\cite{Bar96,Bar98,FRT03}. 
This incurs logarithmic distortion, which is asymptotically tight in the worst 
case. In HST metrics, min-sum clustering can be approximated to within a 
constant factor. Thus, a fundamental challenge of the study of min-sum
clustering is to eliminate the gap between the hardness of approximation lower
bound of $1.415$ and the approxiomation guarantee upper bound of $O(\log n)$.
We show that a constant factor approximation is possible, if one is willing to
regard as outliers a small fraction of the input dataset. For the $\ell_2^2$ case,
we are not aware of any previous non-trivial guarantee for $k\gg \log n/\log\log n$.

Our results are derived using a reduction from min-sum clustering to a
centroid-based criterion with (soft) capacity constraints. This can be done
exactly in the $\ell_2^2$ case, and approximately in the metric case, though
to get polynomial time algorithms we use an approximation in both cases.
This reduction underlies also some of the above-mentioned previous results
on min-sum clustering.
The outcome of this reduction is a $k$-median or $k$-means problem with
non-uniform capacities. If we are aiming for a constant factor approximation 
then we can afford to violate the capacities by a constant factor. There are 
nice results on approximating $k$-median with non-uniform capacities, for
instance~\cite{DL16}. Unfortunately, these results do not seem applicable 
here, because their input is a metric space. The reduction, even for the 
metric case, does not generate a metric instance of capacitated $k$-median
(the triangle inequality is violated unboundedly).
Nevertheless, we do draw some ideas from this literature. 

Our min-sum
clustering algorithm is based on the the well-trodden path of using the
primal-dual schema repeatedly to search for a good Lagrange multiplier
in lagrangian relaxation of the problem (see~\cite{JV01} for the origin of
this method). The dual program has a variable for every data point, and 
a constraint for every possible cluster. The dual ascent process requires 
detection of constraints that become tight. In our case, this is a non-trivial 
problem, which we solve only approximately. As usual, the dual values are 
used to ``buy'' the opening of the clusters that become tight, and we have 
to contend with points contributing simultaneously to multiple clusters. 
This is done, as usual, by creating a conflict graph among the tight clusters
and choosing an independent set in this graph. However, in our case
there are unsual complications. The connection cost is a distance (not
a metric in the $\ell_2^2$ case, but this is a minor concern) multiplied
by the cardinality of the cluster. If there is a conflict between a large
cluster and some small clusters, we have the following dilemma. If we
open the large cluster, the unclustered points in the small clusters may
lack dual ``money'' to connect to the large cluster; they can only afford
the distance multiplied by the cardinality of their small cluster. If, on the
other hand, we open (some or all of) the small clusters, assigning the
unclustered points in the large cluster to those small clusters might
inflate their cardinality by a super-constant factor, leaving all points with
insufficient funds to connect to the inflated clusters.

We resolve this dilemma as follows (using in part some ideas from~\cite{CR05}).
We open larger clusters first, so if a cluster is not opened, it is smaller than
the conflicting cluster that blocked it. Unclustered points are not assigned to
the blocking cluster, but rather aggregated around each blocking cluster to
form their own clusters of appropriate cardinality. We use approximate
cardinality, in scales which are powers of a constant $b$. As we require
the Lagrange multiplier preserving (LMP) property, we must have sufficient 
``funds'' to pay the opening costs in full (but can setttle for paying just a fraction 
of the connection cost). This is possible if in a scale of, say, $b^p$ we have,
say, at least $b^{2+p}$ unclustered points in clusters of scaled cardinality
$b^p$ (each set of roughly this size can afford to open its own cluster). If a 
blocking cluster is blocking fewer points in this scale, we can't afford to cluster 
them and must discard them as outliers. This is the primary source of the excess 
outliers.

As usual, the search for a good Lagrange multiplier may end with two integer 
primal solutions, one with fewer than $k$ clusters and one with more than $k$ 
clusters, whose convex combination is a feasible fractional bipoint solution to 
the $k$-clustering problem. In our case, as we already may have to give up on 
some outliers, we can simply output either the $<k$ solution or the $k$ largest 
clusters in the $>k$ solution. We point out that these extra outliers can be
avoided by using a more sophisticated ``rounding'' of the bipoint solution, but
given our loss in the primal-dual phase, it would not improve meaningfully our 
guarantees.

The above description sums up the algorithm in the case that $n' = n$.
Our result also extends to the case that the optimal solution is also allowed
to discard some outliers (we may have to discard some more). The main
additional issue in the case $n' < n$ is that in the primal-dual phase we
may open a cluster that brings the number of clustered points from below
$n'$ to above $n'$. In this case, some points in this last cluster need to be 
discarded, but then the remaining clustered points might have insufficient 
``funds'' to open the last cluster. If we have many clusters, we can afford to
eliminate the smallest cluster, declaring its points as outliers, and use the dual 
values of the points in that cluster to pay for opening the last one. If there is a 
small number of clusters, we may assume that the primal-dual phase opened 
less than $k$ clusters (to ensure this property, if $k$ is a small constant, we 
employ the known PTAS; thus we can assume that $k$ is large). Our approach 
in this case draws from~\cite{AS16}, where a similar issue is addressed in the 
case of the sum-of-radii $k$-clustering problem. Though the questions are quite 
different, we use a similar idea of computing a (slightly) non-Lagrange multiplier 
preserving approximation to the lagrangian relaxation. The LMP property is regularly 
used in the argument that the bipoint solution is both feasible and cheap; the approach 
we adopt requires an extra argument to bound the cost of a bipoint solution that 
incorporates a non-LMP solution.

The rest of the paper is organized as follows. Section~\ref{sec: def} introduces
some basic definitions and claims. Section~\ref{sec: alg} describes the algorithm.
Section~\ref{sec: proof} analyzes the algorithm. For conciseness, the paper
presents the $\ell_2^2$ case. Our main result is Theorem~\ref{thm: main sqEuc}.
The metric case is essentially identical, and is
briefly explained in Theorem~\ref{thm: main metric}. We note that we made 
no effort to optimize the constant factor guarantees, throughout the paper.

\section{Definitions and Preliminary Claims}\label{sec: def}

Consider an instance of min-sum clustering that is defined by a set of points
$X\subset\RR^d$ and a target number of clusters $k\in\NN$. Let $n = |X|$.
The cost of a cluster $Y\subset X$ is
$$
\cost(Y) = \frac 1 2\cdot \sum_{x,y\in Y} \|x - y\|_2^2.
$$
The center of mass (or mean) of $Y$ is $\cm(Y) = \frac{1}{|Y|}\cdot\sum_{x\in Y} x$.
The following proposition is a well-known fact (for instance, see~\cite{IKI94}).
\begin{proposition}\label{pr: reduction to cap}
The following assertions hold for every finite set $Y\subset\RR^d$.
\begin{enumerate}
\item The center of mass $\cm(Y)$ is the unique minimizer of $\sum_{x\in Y} \|x - y\|_2^2$ over $y\in\RR^d$.
\item $\cost(Y) = |Y|\cdot\sum_{x\in Y} \|x - \cm(Y)\|_2^2$.
\end{enumerate}
\end{proposition}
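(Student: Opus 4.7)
\medskip

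\noindent\emph{Proof plan.} Both assertions are classical identities about squared Euclidean distances, so the plan is essentially a direct calculation in two short steps.

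For assertion 1, I would let $f(y) = \sum_{x\in Y}\|x-y\|_2^2$ and observe that $f$ is a sum of strictly convex functions of $y$, hence strictly convex, so any critical point is the unique global minimizer. Differentiating coordinate-wise gives $\nabla f(y) = -2\sum_{x\in Y}(x-y) = 2|Y|y - 2\sum_{x\in Y} x$, and setting $\nabla f(y) = 0$ yields $y = \frac{1}{|Y|}\sum_{x\in Y} x = \cm(Y)$. (Alternatively, one can avoid calculus by expanding $\|x-y\|_2^2 = \|x-\cm(Y)\|_2^2 + 2\langle x-\cm(Y), \cm(Y)-y\rangle + \|\cm(Y)-y\|_2^2$, summing over $x\in Y$ so that the cross term vanishes by the definition of $\cm(Y)$, and concluding that $f(y) = f(\cm(Y)) + |Y|\cdot\|\cm(Y)-y\|_2^2$, which is minimized uniquely at $y=\cm(Y)$.)

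For assertion 2, I would expand the pairwise sum via the identity $\|x-y\|_2^2 = \|x\|_2^2 - 2\langle x,y\rangle + \|y\|_2^2$. Writing $\mu=\cm(Y)$ and $S=\sum_{x\in Y}\|x\|_2^2$, this gives
\[
\sum_{x,y\in Y} \|x-y\|_2^2 = 2|Y|\,S - 2\Bigl\|\sum_{x\in Y} x\Bigr\|_2^2 = 2|Y|\,S - 2|Y|^2\|\mu\|_2^2.
\]
On the other hand, expanding $\|x-\mu\|_2^2$ and summing yields $\sum_{x\in Y}\|x-\mu\|_2^2 = S - |Y|\|\mu\|_2^2$, so multiplying by $|Y|$ and comparing gives exactly $\cost(Y) = \tfrac12\sum_{x,y\in Y}\|x-y\|_2^2 = |Y|\sum_{x\in Y}\|x-\mu\|_2^2$.

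Neither step presents a real obstacle; the only thing to watch is that the two assertions are conceptually linked (assertion 2 together with the expansion in parentheses above essentially re-derives assertion 1), so I would present assertion 1 first via the one-line gradient computation and strict convexity, and then derive assertion 2 via the algebraic expansion, keeping the two proofs self-contained.
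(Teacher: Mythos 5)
Your proposal is correct. The paper itself gives no proof of this proposition --- it is stated as a well-known fact with a citation to the $k$-means literature --- and your two calculations (strict convexity plus the vanishing-gradient condition, or equivalently the decomposition $f(y)=f(\cm(Y))+|Y|\cdot\|\cm(Y)-y\|_2^2$, for the first assertion, and the standard expansion of the pairwise sum via $\|x-y\|_2^2=\|x\|_2^2-2\langle x,y\rangle+\|y\|_2^2$ for the second) are exactly the standard derivation one would cite. As a minor remark, the bias--variance identity you note in parentheses is the same inequality-free identity the paper later invokes for Proposition~\ref{pr: eps net}, so your argument is fully consistent with the paper's toolkit.
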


A {\em min-sum $k$-clustering} of $X$ is a partition of $X$ into $k$ disjoint 
subsets $X_1,X_2,\dots,X_k$ that minimizes over all possible partititions
$$
\sum_{i=1}^k \cost(X_i) = \sum_{i=1}^k |X_i|\cdot\sum_{x\in X_i} \|x - \cm(X_i)\|_2^2.
$$
In the version allowing outliers, we are given a target $n'\le n$ of the number
of points to cluster, and we are required that $\left|\bigcup_{i=1}^k X_i\right|\ge n'$.
Clearly, the version without outliers is a special case of the version with outliers
with $n' = n$. Let $\opt(X,n',k)$ denote the optimal min-sum cost of clustering
$n'$ points in $X$ into $k$ clusters.
Formally, we can express the goal as a problem of optimizing an exponential
size integer program:
\begin{equation}\label{eq: IP}
\begin{array}{lll}
\hbox{minimize} & \sum_{Y\subset X} \cost(Y)\cdot z_Y & \\
\hbox{s.t.} & \sum_{Y\ni x} z_Y + w_x\ge 1 & \forall x\in X \\
 & \sum_{Y\subset X} z_Y \le k & \\
 & \sum_{x\in X} w_x \le n - n' & \\
 & z_Y\in\{0,1\} & \forall Y\subset X \\
 & w_x\in\{0,1\} & \forall x\in X.
\end{array}
\end{equation}
Fix $b\in\NN$, $b > 1$. For $i\in\NN$, let $\floor_b(i) = b^{\lfloor\log_b i\rfloor}$.
For $Y\subset X$, let $\ctr(Y)$ be a reference point that we set for now as
$\ctr(Y) = \cm(Y)$. Define 
$$
\cost_b(Y) = \floor_b(|Y|)\cdot\sum_{y\in Y} \|y - \ctr(Y)\|_2^2.
$$
In other words, (assuming $\ctr(Y) = \cm(Y)$) we revise $\cost(Y)$ by rounding 
$|Y|$ down to the nearest power of $b$. Thus, 
$\frac{1}{b}\cdot \cost(Y) < \cost_b(Y) \le\cost(Y)$.
We relax the integer program~\eqref{eq: IP} as follows ($b$ to be determined
later):
\begin{equation}\label{eq: LP}
\begin{array}{lll}
\hbox{minimize} & \sum_{Y\subset X} \cost_b(Y)\cdot z_Y & \\
\hbox{s.t.} & \sum_{Y\ni x} z_Y + w_x\ge 1 & \forall x\in X \\
 & \sum_{Y\subset X} z_Y \le k & \\
 & \sum_{x\in X} w_x \le n - n' & \\
 & z_Y\ge 0 & \forall Y\subset X \\
 & w_x\ge 0 & \forall x\in X.
\end{array}
\end{equation}
Then, following a well-traveled path, we lagrangify the constraint on the
number of clusters to get the following lagrangian relaxation ($\lambda$ 
denotes the unknown Lagrange multiplier).
\begin{equation}\label{eq: LP-lag}
\begin{array}{lll}
\hbox{minimize} & \sum_{Y\subset X} \cost_b(Y)\cdot z_Y + 
                  \lambda\cdot \left(\sum_{Y\subset X} z_Y - k\right) & \\
\hbox{s.t.} & \sum_{Y\ni x} z_Y + w_x\ge 1 & \forall x\in X \\
 & \sum_{x\in X} w_x \le n - n' & \\
 & z_Y\ge 0 & \forall Y\subset X \\
 & w_x\ge 0 & \forall x\in X.
\end{array}
\end{equation}
For fixed $\lambda$, this is a linear program, and its dual is:
\begin{equation}\label{eq: dual-lag}
\begin{array}{lll}
\hbox{maximize} & \sum_{x\in X} \alpha_x - \gamma\cdot (n-n') - \lambda\cdot k & \\
\hbox{s.t.} & \sum_{x\in Y} \alpha_x\le \lambda + \cost_b(Y) & \forall Y\subset X \\
 & 0\le \alpha_x \le \gamma & \forall x\in X.
\end{array}
\end{equation}
Notice that the linear program~\eqref{eq: LP-lag} can be interpreted as a relaxation
of the ``facility location'' version of the problem, with $\lambda$-uniform cluster 
opening costs.
\begin{lemma}\label{lm: duality}
For any $\lambda$, the optimal value of the linear program~\eqref{eq: dual-lag} is 
a lower bound on the optimal value of the integer program~\eqref{eq: IP}.
\end{lemma}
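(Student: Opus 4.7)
The plan is to chain three inequalities, running from the integer program~\eqref{eq: IP} down to its LP relaxation~\eqref{eq: LP} (with the modified cost function $\cost_b$), then down to the Lagrangian relaxation~\eqref{eq: LP-lag}, and finally across to its LP dual~\eqref{eq: dual-lag} via strong duality. The only real content is a sign check on the Lagrangian penalty; everything else is standard LP bookkeeping.

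First I would verify that~\eqref{eq: LP} is a valid relaxation of~\eqref{eq: IP}. Any integer feasible $(z,w)$ for~\eqref{eq: IP} satisfies all the constraints of~\eqref{eq: LP}, which only drops the integrality requirements. Moreover, since $\floor_b(|Y|)\le |Y|$ we have $\cost_b(Y)\le\cost(Y)$ for every $Y\subset X$, so the objective value of $(z,w)$ in~\eqref{eq: LP} is no larger than its value in~\eqref{eq: IP}. Hence the optimum of~\eqref{eq: LP} is at most that of~\eqref{eq: IP}.

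Next, provided the Lagrange multiplier satisfies $\lambda\ge 0$, I would argue that the optimum of~\eqref{eq: LP-lag} is at most the optimum of~\eqref{eq: LP}. Program~\eqref{eq: LP-lag} is obtained from~\eqref{eq: LP} by deleting the constraint $\sum_Y z_Y\le k$ and adding the penalty $\lambda\br{\sum_Y z_Y - k}$ to the objective. Any feasible solution of~\eqref{eq: LP} remains feasible for~\eqref{eq: LP-lag}, and the added penalty is non-positive on it because $\sum_Y z_Y\le k$ and $\lambda\ge 0$. So its value in~\eqref{eq: LP-lag} is at most its value in~\eqref{eq: LP}.

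Finally,~\eqref{eq: LP-lag} is a feasible LP with bounded optimum (take $z_Y=1$ for $Y=X$ and $w_x=0$ for every $x$, for instance), and~\eqref{eq: dual-lag} is its LP dual — this is a routine calculation to verify, treating $\alpha_x\ge 0$ as the dual of $\sum_{Y\ni x} z_Y + w_x\ge 1$ and $\gamma\ge 0$ as the dual of $\sum_x w_x\le n-n'$. Strong LP duality then gives that the two optima coincide, and chaining the three inequalities yields the desired bound. The only delicate point is the implicit restriction to $\lambda\ge 0$, which is standard when Lagrangifying a constraint of the form $\sum_Y z_Y\le k$; under that sign convention the main obstacle, such as it is, is just to make sure the three programs are being compared in the correct direction.
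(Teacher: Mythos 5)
Your proposal is correct for $\lambda\ge 0$ and reaches the bound by a genuinely different chain than the paper. You go IP $\ge$ LP~\eqref{eq: LP} $\ge$ Lagrangian~\eqref{eq: LP-lag} $\ge$ dual~\eqref{eq: dual-lag}, bounding the penalty term via $\sum_Y z_Y\le k$ and the sign of $\lambda$, and closing with strong duality. The paper instead takes one optimal \emph{integer} solution, splits clusters (which cannot increase the min-sum cost) so that exactly $k$ clusters are used, which makes the Lagrange term vanish identically; then $\cost_b(Y)\le\cost(Y)$ shows this solution has value at most $\opt$ in~\eqref{eq: LP-lag}, and weak duality alone finishes. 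That route buys two things your version lacks: the statement holds literally ``for any $\lambda$'' (no sign assumption needed), and no strong-duality hypotheses (feasibility, boundedness, attainment) have to be checked --- weak duality in the needed direction suffices, so your strong-duality step is more machinery than required. Your restriction to $\lambda\ge0$ technically falls short of the lemma as stated, though it is harmless in context: the algorithm only ever uses $\lambda\in[0,\lambda_2]$, and for $\lambda<0$ the program~\eqref{eq: dual-lag} is in fact infeasible (the singleton constraint for $Y=\{x\}$ forces $\alpha_x\le\lambda<0\le\alpha_x$), so the claim is vacuous there; adding either of these one-line observations, or adopting the paper's exact-$k$ normalization, would close the gap completely.
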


\begin{proof}
Consider any optimal solution $(z,w)$ to the integer program~\eqref{eq: IP}.
Notice that we may assume that $\sum_{Y\subset X} z_Y = k$, otherwise we
can split some clusters to get exactly $k$ of them. Splitting clusters cannot
increase the cost of the solution. This is also a feasible solution to the linear 
program~\eqref{eq: LP-lag}. Moreover, the Lagrange term
$\lambda\cdot \left(\sum_{Y\subset X} z_Y - k\right)$ zeroes out, and
$\sum_{Y\subset X} \cost_b(Y)\cdot z_Y\le \sum_{Y\subset X} \cost(Y)\cdot z_Y$.
By weak duality, the value of any feasible solution to the dual program~\eqref{eq: dual-lag}
is a lower bound on the value of any feasible solution to the linear program~\eqref{eq: LP-lag}.
\end{proof}

An obvious issue with the dual program~\eqref{eq: dual-lag} is that the number of
constraints is exponential in $n$. We want to construct a dual solution by growing 
the dual variables, however, it is not clear how to detect new tight dual constraints
without enumerating over the $\exp(n)$ number of constraints. We now address 
this issue. First consider the following fact.
\begin{proposition}\label{pr: eps net}
Let $Y$ be a finite set of points in $\RR^d$. There exists $y\in Y$ such that
$\sum_{x\in Y} \|x - y\|_2^2 \le 2\cdot\sum_{x\in Y}\|x - \cm(Y)\|_2^2$.
(We note that the factor of $2$ can be improved to $1+\eps$, for any $\eps > 0$,
using the center of mass of $O(1/\eps^2)$ points in $Y$, e.g.~\cite{IKI94,FKKR03}.)
\end{proposition}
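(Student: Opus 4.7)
The plan is to exploit the standard parallel-axis-style identity implicit in Proposition~\ref{pr: reduction to cap} together with an averaging argument. Specifically, for any reference point $y\in\RR^d$,
\begin{equation*}
\sum_{x\in Y} \|x - y\|_2^2 \;=\; \sum_{x\in Y} \|x - \cm(Y)\|_2^2 \;+\; |Y|\cdot \|y - \cm(Y)\|_2^2,
\end{equation*}
which follows from expanding $\|x-y\|_2^2 = \|x-\cm(Y)\|_2^2 + 2\langle x-\cm(Y), \cm(Y)-y\rangle + \|\cm(Y)-y\|_2^2$ and noting that the cross term vanishes after summing over $x\in Y$, since $\sum_{x\in Y}(x-\cm(Y))=0$ by definition of the center of mass.

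With this identity in hand, the task reduces to exhibiting some $y \in Y$ with $|Y|\cdot \|y-\cm(Y)\|_2^2 \le \sum_{x\in Y} \|x-\cm(Y)\|_2^2$. The natural choice is the $y\in Y$ closest to $\cm(Y)$. For this $y$, the minimum is at most the average, so
\begin{equation*}
|Y|\cdot \|y - \cm(Y)\|_2^2 \;\le\; \sum_{y'\in Y} \|y' - \cm(Y)\|_2^2,
\end{equation*}
and plugging this into the identity above yields the claimed factor-$2$ bound. (Equivalently, one may bypass choosing a specific $y$ by picking $y$ uniformly at random from $Y$ and observing that $\EE_y\bigl[\sum_{x\in Y} \|x-y\|_2^2\bigr] = 2\sum_{x\in Y}\|x-\cm(Y)\|_2^2$, so an averaging argument produces a witness in $Y$.)

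There is essentially no technical obstacle here; the only thing worth flagging is that one must use $\cm(Y)$ as the pivot in order to kill the cross term, which is exactly what Proposition~\ref{pr: reduction to cap} justifies. The parenthetical strengthening to a $(1+\eps)$ factor, as noted in the statement, is not needed for the proof and can be taken as a reference to the cited works; if one wanted to recover it, one would replace a single nearest point by the center of mass of a random sample of $O(1/\eps^2)$ points from $Y$ and invoke a standard concentration bound, but that is outside the scope of what must be proved here.
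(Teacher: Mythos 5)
Your proof is correct and follows essentially the same route as the paper: the parallel-axis identity with pivot $\cm(Y)$ (which the paper states as an inequality, citing~\cite{ORSS06}), followed by choosing $y\in Y$ minimizing $\|y-\cm(Y)\|_2^2$ and using that the minimum is at most the average. The only difference is that you spell out the cross-term cancellation and the averaging step explicitly, which the paper leaves implicit.
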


\begin{proofsketch}
Notice that for every $y\in\RR^d$,
$$
\sum_{x\in Y} \|x - y\|_2^2 \le \sum_{x\in Y}\|x - \cm(Y)\|_2^2 + |Y|\cdot \|y - \cm(Y)\|_2^2
$$
(see, e.g.~\cite{ORSS06}).
Thus, by picking $y\in Y$ that minimizes $\|y - \cm(Y)\|_2^2$, the proposition follows.
\end{proofsketch}

An immediate consequence of Proposition~\ref{pr: eps net} is that ${\cal F} = X$ is 
a set of $n$ points in $\RR^d$, such that for every $Y\subset X$ there exists a point 
$c_Y\in {\cal F}$ such that 
\begin{equation}\label{eq: approx center}
\sum_{x\in Y} \|x - \cm(Y)\|_2^2\le \sum_{x\in Y} \|x - c_Y\|_2^2\le 2\cdot\sum_{x\in Y} \|x - \cm(Y)\|_2^2. 
\end{equation}
(We can improve the factor of $2$ to any constant $1+\eps$ by increasing the size of 
${\cal F}$ to $n^{O(1/\eps^2)}$.) Now, given ${\cal F}$, set initially $\ctr(Y) = c_Y$ for all 
$Y\subset X$. Notice that this puts 
$\cost_b(Y) = \floor_b(|Y|)\cdot\sum_{y\in Y} \|y - c_Y\|_2^2$. We consider the following 
revised dual program.
\begin{equation}\label{eq: dual-lag-rev}
\begin{array}{lll}
\hbox{maximize} & \sum_{x\in X} \alpha_x - \gamma\cdot (n-n') - \lambda\cdot k & \\
\hbox{s.t.} & \sum_{x\in Y} \alpha_x\le \lambda + 
        \floor_b(|Y|)\cdot\sum_{x\in Y} \|x - y\|_2^2 & \forall Y\subset X,\ \forall y\in Y \\
   & 0\le \alpha_x \le \gamma & \forall x\in X.
\end{array}
\end{equation}
\begin{lemma}\label{lm: duality-rev}
For any $\lambda$, the optimal value of the linear program~\eqref{eq: dual-lag-rev} is 
at most twice the optimal value of the integer program~\eqref{eq: IP}.
\end{lemma}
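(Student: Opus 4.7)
The plan is to combine two observations. First, that the revised program~\eqref{eq: dual-lag-rev} is strictly more constrained than the dual~\eqref{eq: dual-lag} taken with the specific choice $\ctr(Y)=c_Y$ from Proposition~\ref{pr: eps net}, so its optimum can only be smaller. Second, that the proof of Lemma~\ref{lm: duality}, repeated almost verbatim, bounds this less constrained dual by $2\cdot\opt(\text{IP})$, where the extra factor of $2$ is the price of using $c_Y$ in place of $\cm(Y)$ and is paid for by~\eqref{eq: approx center}.

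For the first step, I would note that Proposition~\ref{pr: eps net} places $c_Y$ inside $Y$ itself. Therefore the single constraint of~\eqref{eq: dual-lag} for the subset $Y$, namely $\sum_{x\in Y}\alpha_x\le\lambda+\floor_b(|Y|)\sum_{x\in Y}\|x-c_Y\|_2^2$, is exactly one of the $|Y|$ constraints imposed by~\eqref{eq: dual-lag-rev} for the same $Y$. Hence every $(\alpha,\gamma)$ feasible for~\eqref{eq: dual-lag-rev} is also feasible for~\eqref{eq: dual-lag} with $\ctr(Y)=c_Y$ and has the same objective, so the optimum of~\eqref{eq: dual-lag-rev} is at most the optimum of~\eqref{eq: dual-lag} with this choice of centers.

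For the second step, I would take an optimal integer solution $(z^*,w^*)$ to~\eqref{eq: IP} and, as in the proof of Lemma~\ref{lm: duality}, split clusters so that $\sum_Y z^*_Y=k$; splitting cannot increase $\cost$ and preserves the covering and outlier constraints, so the split solution still has cost at most $\opt(\text{IP})$. Regarded as a primal solution of~\eqref{eq: LP-lag} with $\ctr(Y)=c_Y$, its Lagrange term vanishes, and combining $\floor_b(|Y|)\le|Y|$ with~\eqref{eq: approx center} gives $\cost_b(Y)\le 2\,\cost(Y)$ for every cluster in the support of $z^*$. Summing and invoking weak LP duality bounds the optimum of~\eqref{eq: dual-lag} with $\ctr(Y)=c_Y$ by $2\cdot\opt(\text{IP})$, which chains with the first step to yield the lemma. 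The only bookkeeping point is the verification that $c_Y\in Y$, which is exactly what Proposition~\ref{pr: eps net} supplies; I do not anticipate any real obstacle beyond this.
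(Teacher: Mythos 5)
Your proof is correct and essentially matches the paper's: both arguments come down to exhibiting the optimal clustering, with the approximate centers $c_Y\in Y$ guaranteed by Proposition~\ref{pr: eps net}, as a primal certificate and invoking weak duality, with the factor $2$ supplied by~\eqref{eq: approx center}. Your detour through~\eqref{eq: dual-lag} with $\ctr(Y)=c_Y$ (via containment of feasible regions, since the constraint for $y=c_Y$ is one of the constraints of~\eqref{eq: dual-lag-rev}) is just a repackaging of the paper's step of writing out the dual~\eqref{eq: LP-lag-rev} of the revised program explicitly and plugging in the indicator solution $z_{Y,c_Y}$.
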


\begin{proof}
The dual of the linear program~\eqref{eq: dual-lag-rev} is
\begin{equation}\label{eq: LP-lag-rev}
\begin{array}{ll}
\min & \left\{\sum_{Y\subset X}\sum_{y\in Y} \floor_b(|Y|)\cdot\sum_{x\in Y} \|x - y\|_2^2\cdot z_{Y,y} + 
                  \lambda\cdot \left(\sum_{Y\subset X}\sum_{y\in Y} z_{Y,y} - k\right):\right.  \\
  & \left.\forall x\in X,\ \sum_{Y\ni x}\sum_{y\in Y} z_{Y,y} + w_x\ge 1\wedge
 \sum_{x\in X} w_x \le n - n' \wedge z,w\ge 0\right\}
\end{array}
\end{equation}
Consider an optimal clustering of any $n'$ points in $X$ into $k$ disjoint clusters 
$Y_1,Y_2,\dots,Y_k$. For all $Y\subset X$, set $z_{Y,y}$ to be the indicator
that $Y$ is a cluster in this list and $y = c_Y$. Also, for all $x\in X$ set $w_x$
to be the indicator that $x$ is not clustered.  Clearly, this is a feasible solution
to the linear program~\eqref{eq: LP-lag-rev}, so its value is an upper bound on
the optimal value of the linear program~\eqref{eq: dual-lag-rev}. The Lagrange
term vanishes as there are exactly $k$ non-zero values $Z_{Y,y}$. Thus, the
upper bound is
$$
\sum_{j=1}^k \floor_b(|Y_j|)\cdot\sum_{x\in Y_j} \|x - c_{Y_j}\|_2^2\le
2\cdot\sum_{j=1}^k \floor_b(|Y_j|)\cdot\sum_{x\in Y_j} \|x - \cm(Y_j)\|_2^2\le
2\cdot\sum_{j=1}^k \cost(Y_j),
$$
where the first inequality uses Equation~\eqref{eq: approx center}.
\end{proof}

In the primal-dual procedure, there is an active set $\activ\subset X$ of points for
which it is safe to raise the dual variable $\alpha_x$ for all $x\in\activ$. We need
to detect when a new dual constraint becomes tight and requires the removal of
the points that are involved from $\activ$. This can be done in polynomial time
for the revised dual program~\eqref{eq: dual-lag-rev} as follows. For every $y\in X$
and for every $j\in\{0,1,2,\dots,\log_b\floor_b(n)\}$, we check if there exists
$Y\subset X$ that satisfies ($i$) $y\in Y$; ($ii$) $Y\cap \activ\ne\emptyset$;
($iii$) $\log_b \floor_b(|Y|) = j$; ($iv$) $\sum_{x\in Y} \alpha_x \ge \lambda + 
b^j\cdot\sum_{x\in Y} \|x - y\|_2^2$. 
In order to do this, consider the set of points 
$C_{y,j} = \{x\in X:\ \alpha_x\ge b^j\cdot\|x - y\|_2^2\}$,
and sort $C_{y,j}$ by nonincreasing order of $\alpha_x - b^j\cdot\|x - y\|_2^2$.
\begin{lemma}\label{lm: testing}
There exists a choice of $Y,y,j$  that satisfies ($i$)--($iv$) iff there exists a choice of 
$y,j$ such that $|C_{y,j}|\ge b^j$ and $C_{y,j}\cap\activ\ne\emptyset$ and the first 
$\min\{|C_{y,j}|,b^{j+1}-1\}$ points in the above order that include $y$ and at least 
one point from $\activ$ are a set that satisfies ($i$)--($iv$).
\end{lemma}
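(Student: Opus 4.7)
The $(\Leftarrow)$ direction is immediate: the right-hand side explicitly names a set $Y^*$ (the top $\min\{|C_{y,j}|,b^{j+1}-1\}$ points of $C_{y,j}$ in the sort order, with $y$ and one active point forced in) and asserts it satisfies (i)--(iv), so nothing remains to verify. The content lies in the $(\Rightarrow)$ direction, which I plan to prove by showing that any witness $(Y,y,j)$ for (i)--(iv) can be transformed into a pair $(y',j')$ meeting the two algorithmic preconditions $|C_{y',j'}|\ge b^{j'}$ and $C_{y',j'}\cap\activ\ne\emptyset$, and whose associated greedy prefix also satisfies (i)--(iv).

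The key identity driving the argument is the reformulation of (iv) as $\sum_{x\in Y}(\alpha_x - b^{j}\|x-y\|_2^2)\ge\lambda$, in which each summand is nonnegative exactly when $x\in C_{y,j}$. Two monotonicity principles follow: dropping any $x\in Y\setminus C_{y,j}$ only strengthens (iv); and replacing $j$ by any $j'\le j$ only reduces the per-point penalty $b^{j'}\|x-y\|_2^2$, again strengthening (iv). Fixing $y'=y$, I would prune $Y$ to $Y_1:=Y\cap C_{y,j}$ and take $j'$ to be the largest integer with $b^{j'}\le|Y_1|$. Then $y\in C_{y,j'}$ holds automatically (since $\alpha_y\ge 0 = b^{j'}\|y-y\|_2^2$), $C_{y,j'}\supseteq C_{y,j}\supseteq Y_1$ forces the size precondition $|C_{y,j'}|\ge b^{j'}$, and the greedy prefix of $C_{y,j'}$, being the top $\min\{|C_{y,j'}|,b^{j'+1}-1\}\ge|Y_1|$ contributors, has sum at least that of the size-$|Y_1|$ subset $Y_1$ of $C_{y,j'}$ at the rescaled penalty $b^{j'}$, which is itself $\ge\lambda$ by the two monotonicities.

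The main obstacle is the active-point precondition, which the pair $(y,j')$ above may not satisfy. If some $x_0\in Y\cap\activ$ already lies in $C_{y,j'}$, we are done; otherwise every active witness in $Y$ satisfies $\alpha_{x_0}<b^{j'}\|x_0-y\|_2^2$. In that case I would either further decrease $j'$ (which is possible while $\alpha_{x_0}>0$ for some active witness in $Y$, since $b^{j'}\|x_0-y\|_2^2$ shrinks as $j'$ decreases), or switch the center by setting $y':=x_0$ for an active $x_0\in Y$ and rerunning the pruning-and-rescaling at $(x_0,j)$; for this new choice the active-point precondition is trivial because $x_0\in C_{x_0,j'}$ always, and (iv) at the new center still follows from the original witness using the same monotonicities. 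Finally, the forced swap-in of $y'$ or of the active witness in the greedy step merely trades a nonnegative contribution of $C_{y',j'}$ for another nonnegative one, and the accumulated slack produced by pruning negative summands and lowering the penalty from $b^{j}$ to $b^{j'}$ absorbs any such loss, so (iv) survives for the greedy prefix.
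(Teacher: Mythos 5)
Your forward direction starts exactly as the paper's does: prune the witness to $Y_1=Y\cap C_{y,j}$, reset $j'=\log_b\floor_b(|Y_1|)\le j$, observe that discarding negative-slack points and lowering the exponent both preserve (iv), note $y\in C_{y,j'}$ and $Y_1\subseteq C_{y,j}\subseteq C_{y,j'}$, and then dominate the slack sum of $Y_1$ by that of the greedy prefix. Up to there you are reproducing the paper's argument. The divergence, and the genuine gap, is precisely at the point you yourself flag as ``the main obstacle'': the active-point condition (ii). Your first patch (keep lowering $j'$) is capped at $j'=0$, and even there an active witness $x_0$ need not enter $C_{y,0}$, since that requires $\alpha_{x_0}\ge\|x_0-y\|_2^2$, which can simply fail. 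Your second patch (recenter at $y':=x_0$) is not justified by ``the same monotonicities'': both monotonicities fix the center, and recentering replaces every penalty $b^{j}\|x-y\|_2^2$ by $b^{j'}\|x-x_0\|_2^2$, which can be arbitrarily larger; the slack freed by pruning does not absorb this. Concretely, take $b=10$, $\lambda=1$, ten inactive points at the origin each with $\alpha=D+\tfrac{1}{10}$, and one active point $x_0$ at squared distance $D=1000$ from the origin with $\alpha_{x_0}=0$. With $y$ an origin point and $j=1$, conditions (i)--(iv) hold (the constraint is exactly tight), but for $y'=x_0$ the only scale passing the size precondition is $j'=0$, and the nine-point prefix (forced to contain $x_0$) has slack sum $8\cdot\tfrac{1}{10}<\lambda$, so (iv) fails at the new center. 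Thus the recentering step, as stated, is false, and your proof of the forward direction is incomplete.

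It is worth noting why the paper does not run into this: it never moves the center. It keeps $y$, and asserts that pruning a point of $Y\setminus C_{y,j}$ (which has strictly negative slack) preserves (i)--(iv), including (ii). That assertion is what makes the argument go through, and it is sound in the context in which the lemma is used: inactive dual variables are frozen and feasibility of~\eqref{eq: dual-lag-rev} has been maintained so far, so a strictly violated constraint can never involve only inactive points; hence the pruned set, whose violation is strictly larger, still meets $\activ$, and therefore so does $C_{y,j'}$, giving both the precondition $C_{y,j'}\cap\activ\ne\emptyset$ and a legitimate comparator for the maximality of the greedy prefix. (Indeed, the configuration above cannot arise during the algorithm, since the ten frozen duals would already violate the constraint for the ten origin points alone.) So the correct repair of your argument is to justify that condition (ii) survives the pruning via this invariant, not to change the center.
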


\begin{proof}
Clearly the existence of $y,j$ such that $C_{y,j}$ has the listed properties implies
the existence of $Y,y,j$ that satisfy ($i$)--($iv$). As for the other direction, consider
$Y,y,j$ that satisfy ($i$)--($iv$). Clearly $y\in C_{y,j}$. Suppose that there exists a
point $x\in Y\setminus C_{y,j}$. Then, putting $Y'=Y\setminus \{x\}$ and
$j' = \log_b \floor_b(|Y'|)\le j$, we have that $Y',y,j'$ also satisfy ($i$)--($iv$).
Thus, we may assume that $Y\subset C_{y,j}$. Now, the choice in the lemma
of a subset of $C_{y,j}$ maximizes
$\sum_{x\in Y} \left(\alpha_x - b^j\cdot\|x - y\|_2^2\right)$ subject to the conditions
($i$)--($iii$). Thus, this subset also satisfies ($iv$).
\end{proof}

There are $O(n\log n)$ pairs $y,j$. Listing and sorting each $C_{y,j}$ takes
at most $O(n\log n)$ operations. Listing the candidate $Y\subset C_{y,j}$
and checking it takes $O(|C_{y,j}|)$ operations. Thus, finding a new tight
constraint can be done in polynomial time. (Trivially, we can discretize the
increase of the dual variables and/or use binary search to find the increase
that causes a new constraint to become tight. As we're dealing with squared
Euclidean distance, if the input consists of finite precision rational numbers,
then all computed values are finite precision rational numbers.)

\section{The Algorithm}\label{sec: alg}

We now describe the following three-phase primal-dual algorithm 
(see Algorithm~\ref{fig: alg} on page~\pageref{fig: alg})
that can be used to solve the facility location version of min-sum clustering. In addition
to the pointset $X$, the cluster opening cost $\lambda$, and the target number of 
points $n'$, the algorithm also gets a (sufficiently large, TBD) parameter $b$ that governs
the excess number of discarded outliers in its output. Throughout the algorithm, sets of
points $Y\subset X$ will maintain values $\card_b(Y)$ and $\ctr(Y)$. Clearly, we cannot
do this explicitly and efficiently for every set $Y\subset X$. We use Lemma~\ref{lm: testing}
and its consequences to implement the operations that we need, without storing explicitly
these values for more than $n$ sets. This affects only the first phase of the algorithm. 
For $x\in X$ and $Y\subset X$, we denote throughout the paper 
$d(x,Y) = b^{\card_b(Y)}\cdot\|x - \ctr(Y)\|_2^2$. This is interpreted according to the relevant
values of $\card_b(Y)$ and $\ctr(Y)$.

Phase 1 constructs a dual solution and collects candidate clusters. During phase 1, a point 
$x$ is either active or inactive. Initially, for all $x\in X$, we set $\alpha_x$ to $0$, and we set 
$x$ to be active. The set of candidate clusters $\pclusters$ is empty. We raise all active $x$ 
at a uniform rate, and pause to change the status of points and clusters if one of the following 
events happens.
\begin{itemize}
\item There exists an active $x\in X$ and a cluster $Y\in \pclusters$ such that
         $\alpha_x\ge d(x,Y)$. In this case, replace $Y$ by $Y\cup\{x\}$ in $\pclusters$.
         The new cluster in $\pclusters$ inherits the $\card_b$ and $\ctr$ values from
         $Y$. Also set $x$ to be inactive.
\item There exists $Y\subset X$ that contains an active point and $y\in Y$ such that the dual
         constraint associated with the pair $Y,y$ is tight. Explicitly,
         $$
         \sum_{x\in Y} \alpha_x \ge \lambda + \cost_b(Y),
         $$
         where we set $\card_b(Y) = \log_b\floor_b(|Y|)$ and $\ctr(Y) = y$.
         In this case, add an inclusion-wise minimal such $Y$ to $\pclusters$ and set all 
         $x\in Y$ to be inactive (and set $\card_b(Y)$ and $\ctr(Y)$ as stated above).
\end{itemize}
The first phase ends as soon as the number of active $x\in X$ drops to $n-n'$ or
lower. If this number drops below $n-n'$, we do not add the last cluster $Y_{\last}$
to $\pclusters$, but keep it separately. Note that each new tight constraint causes
at least one point $x\in X$ to become inactive, hence the number of sets $Y$ that
require storing explicitly their parameters $\card_b(Y)$ and $\ctr(Y)$ is at most 
$n'\le n$.

In phase 2, we trim the set of candidate clusters and assign points uniquely to the clusters
in the trimmed list, as follows. Note that we need the parameters $\card_b$ and $\ctr$ only
for clusters for which these values were stored explicitly in phase 1.
Define a conflict graph on the clusters in $\pclusters$. Two
clusters $Y_1,Y_2\in \pclusters$ are connected by an edge in the conflict graph iff there
exists $x\in Y_1\cap Y_2$ such that $\alpha_x > \max\{d(x,Y_1),d(x,Y_2)\}$. In other words,
the edge $\{Y_1,Y_2\}$ indicates that there is $x\in Y_1\cap Y_2$ that contributes to the
opening cost $\lambda$ of both $Y_1$ and $Y_2$. Next, take a lexicographically maximal
independent set ${\cal I}$ in the conflict graph, ordering $\pclusters$ by non-increasing 
order of $\card_b(Y)$, breaking ties arbitrarily. We group the points clustered in $\pclusters$ 
into meta-clusters of the form $(Y,Y')$, where $Y\in{\cal I}$ indicates the meta-cluster, and
$Y'$ is a set of points. (Thus, the entire meta-cluster associated with $Y$ is
$\cup_{(Y,Y')\in\mclusters} Y'$.) In particular, for $Y\in{\cal I}$, we put $(Y,Y)\in\mclusters$.
Any remaining points in $\pclusters$ are added as follows. If $Y'\not\in{\cal I}$, then let
$Y''$ be the set of remaining points in $\{x\in Y':\ \alpha_x = \max_{y\in Y'} \alpha_y\}$,
and let $Y\in{\cal I}$ be such that $Y$ precedes $Y'$ in the order on $\pclusters$ and
$\{Y,Y'\}$ is an edge. Add $(Y,Y'')$ to $\mclusters$, with $\card_b(Y'')=\card_b(Y')$ and
$\ctr(Y'')=\ctr(Y)$. Finally, if fewer than $n'$ points were thus assigned to meta-clusters,
add $(Y_{\last},Y)$ to $\mclusters$, where $Y$ is a set of previously unclustered points
from $Y_{\last}$ of the cardinality needed to complete the number of clustered point to
$n'$. (Notice that at least $n'$ points are clustered in $\pclusters\cup\{Y_{\last}\}$, so this 
is possible.)

Phase 3 determines the final output clustering of the points. For every meta-cluster
$(Y,\cdot)$ and for every integer $p\le\card_b(Y)$, let $n_{Y,p}$ denote the number 
of points $x\in X$ such that there exists $(Y,Y')\in\mclusters$ with $Y'\ni x$ and
$\card_b(Y') = p$. We open clusters as follows. For $p = \card_b(Y)$, we open 
$\max\{1,\left\lfloor \frac{n_{Y,p-2}+n_{Y,p-1}+n_{Y,p}}{b^{2+p}}\right\rfloor\}$ clusters
and assign all the points counted in $n_{Y,p-2},n_{Y,p-1},n_{Y,p}$ to these clusters, 
as evenly as possible. 
\begin{lemma}\label{lm: cardinality max}
The number of points in each such cluster is at most $2b^{2+p}$, and if $Y\ne Y_{\last}$
then this number is at least $b^p$.
\end{lemma}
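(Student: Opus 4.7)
My plan is to unwind the definitions and split into two elementary cases based on which term attains the $\max$ in $m := \max\{1, \lfloor t/b^{2+p}\rfloor\}$, where $t := n_{Y,p-2}+n_{Y,p-1}+n_{Y,p}$. Since the $t$ points are spread across the $m$ opened clusters as evenly as possible, every resulting cluster ends up with either $\lfloor t/m\rfloor$ or $\lceil t/m\rceil$ points, so it suffices to bound these two integers from above by $2b^{2+p}$ and, when $Y\neq Y_{\last}$, from below by $b^p$.

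For the upper bound, the interesting case is $m = \lfloor t/b^{2+p}\rfloor \geq 1$. Here $t < (m+1)b^{2+p}$, so $t/m < (1+1/m)\,b^{2+p} \leq 2b^{2+p}$, and the ceiling is likewise at most $2b^{2+p}$. In the remaining case $m=1$ with $t<b^{2+p}$, the lone cluster trivially has fewer than $b^{2+p}\leq 2b^{2+p}$ points.

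For the lower bound, assuming $Y\neq Y_{\last}$, the analogous case $m=\lfloor t/b^{2+p}\rfloor\geq 1$ yields $t/m\geq b^{2+p}\geq b^p$, so even $\lfloor t/m\rfloor$ is at least $b^p$. The main, and really the only, obstacle is the degenerate case $m=1$, where all $t$ points go into a single cluster and we must show $t\geq b^p$ directly. This is exactly where the hypothesis $Y\neq Y_{\last}$ is used: by Phase 2, $Y\in{\cal I}$ forces $(Y,Y)\in\mclusters$ with stored parameter $\card_b(Y)=p$, and by construction $|Y|\geq \floor_b(|Y|) = b^p$. Hence $n_{Y,p}\geq|Y|\geq b^p$, so $t\geq b^p$ and the lone cluster meets the bound. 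The lemma correctly excludes $Y_{\last}$ because the meta-cluster added for $Y_{\last}$ in Phase 2 is a truncated subset chosen only to bring the clustered count up to $n'$, so its size can legitimately fall below $b^p$.
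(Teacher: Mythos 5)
Your proof is correct and follows essentially the same route as the paper: bound the even split by $t/s < (1+1/s)b^{2+p} \le 2b^{2+p}$ when $s=\lfloor t/b^{2+p}\rfloor \ge 1$ clusters are opened (and trivially when only one is), and for the lower bound use either the per-cluster size $\ge b^{2+p}$ when $s\ge 1$, or, in the single-cluster case, the fact that $Y\ne Y_{\last}$ gives $(Y,Y)\in\mclusters$ with $|Y|\ge\floor_b(|Y|)=b^p$. Your write-up is just a slightly more explicit case analysis of the same argument, so there is nothing to add.
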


\begin{proof}
If we open one cluster, then clearly $n_{Y,p-2}+n_{Y,p-1}+n_{Y,p} < 2b^{2+p}$. 
If we open $s > 1$ clusters, then we must have 
$sb^{2+p}\le n_{Y,p-2}+n_{Y,p-1}+n_{Y,p} < (s+1)b^{2+p}$.
Thus, the number of points in each cluster is between $b^{2+p}$ and
$(1+1/s) b^{2+p}$. Clearly, if $Y\ne Y_{\last}$, then $(Y,Y)\in\mclusters$, and
by the definition of $p = \card_b(Y)$, it holds that $|Y|\ge b^p$.
\end{proof}

For $p < \card_b(Y)-2$, we open $\left\lfloor \frac{n_{Y,p}}{b^{2+p}}\right\rfloor$ 
clusters. If this number is at least $1$, we assign all the points counted in $n_{Y,p}$ 
to these clusters, as evenly as possible. If this number is $0$, we discard all
the points counted in $n_{Y,p}$ as outliers.
\begin{lemma}\label{lm: cardinality p}
In this step, if no cluster is opened then the number of points that are
discarded is less than $b^{2+p}$, and otherwise the number of points
in each cluster is at least $b^{2+p}$ and less than $2b^{2+p}$.
\end{lemma}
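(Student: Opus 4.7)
The plan is to split into two cases based on the integer $s:=\lfloor n_{Y,p}/b^{2+p}\rfloor$, which by definition is precisely the number of clusters opened in this step. The entire lemma is just an unpacking of the definition of this floor together with the ``as evenly as possible'' rule for distributing the $n_{Y,p}$ points among the $s$ open clusters.

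If $s=0$, the definition of the floor gives $n_{Y,p}<b^{2+p}$ directly. Since in this case all $n_{Y,p}$ points are discarded as outliers, the first clause of the lemma is immediate. If $s\ge 1$, the floor sandwich $s\,b^{2+p}\le n_{Y,p}<(s+1)b^{2+p}$ is the workhorse for both bounds. Distributing $n_{Y,p}$ points as evenly as possible into $s$ bins produces bin sizes equal to $\lfloor n_{Y,p}/s\rfloor$ or $\lceil n_{Y,p}/s\rceil$. From $n_{Y,p}\ge s\,b^{2+p}$ and integrality of $b^{2+p}$, we get $\lfloor n_{Y,p}/s\rfloor\ge b^{2+p}$, giving the lower bound. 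For the upper bound, $n_{Y,p}/s<(1+1/s)b^{2+p}\le 2\,b^{2+p}$, and I need to transfer this to the integer $\lceil n_{Y,p}/s\rceil$.

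The only (very mild) obstacle is the strictness of the upper bound $\lceil n_{Y,p}/s\rceil<2\,b^{2+p}$. For $s=1$ there is only one cluster, holding $n_{Y,p}$ points, and the strict inequality is exactly the integer statement $n_{Y,p}<2\,b^{2+p}$ coming from $n_{Y,p}<(s+1)b^{2+p}=2\,b^{2+p}$. For $s\ge 2$ one has $(1+1/s)\le 3/2$, so the real number $n_{Y,p}/s$ lies strictly below $\tfrac{3}{2} b^{2+p}$, which leaves enough room to absorb the $+1$ incurred by the ceiling and still land strictly below $2\,b^{2+p}$. Writing this out carefully using integrality of $n_{Y,p}$ and $s$ closes the case.

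In summary, the proof is parallel in spirit to that of Lemma~\ref{lm: cardinality max}, but simpler because only a single scale $p$ contributes to $n_{Y,p}$ and there is no separate $Y_{\last}$ caveat; the only content is to read off the two-sided bound on bin sizes from the floor sandwich on $n_{Y,p}$.
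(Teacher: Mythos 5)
Your argument is correct and is exactly the elementary unpacking the paper has in mind: the paper's own proof simply declares the assertion trivial, since it follows from the floor sandwich $s\,b^{2+p}\le n_{Y,p}<(s+1)b^{2+p}$ and the even-partition rule (just as in Lemma~\ref{lm: cardinality max}). Your extra care with the ceiling in the case $s\ge 2$ is fine (it uses only $b^{2+p}\ge 2$, which holds since $b\ge 2$ and $p\ge 0$) and does not change the approach.
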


\begin{proof}
The assertion is trivial.
\end{proof}

We are now ready to define our min-sum $k$-clustering algorithm 
(see Algorithm~\ref{fig: alg2} on page~\pageref{fig: alg2}).
If $k\le \frac{4}{\eps}$, we can run a PTAS or a constant factor approximation for fixed $k$ 
(for instance~\cite{FKKR03,CS07}).\footnote{These papers consider only the case without
outliers. The PTAS in~\cite{FKKR03} enumerates over cluster sizes and approximate cluster 
centers, then computes an optimal assignment of the data points to the approximate centers,
given the corresponding cluster sizes. Clearly, the algorithm can be adapted trivially to handle 
the case with outliers by modifying the target sum of cluster sizes.}
Otherwise, our algorithm follows the general schema 
of the lagrangian relaxation method. Let $\delta > 0$ be determined later. We run the
procedure {\sc PrimalDual} on various values of $\lambda$, and if the smallest returned 
cluster has at most $\frac{\eps}{3}\cdot n'$ points, we remove this cluster. Using binary search 
on the Lagrange multiplier $\lambda$, we find two values $\lambda_1 < \lambda_2$, with 
$\lambda_2 - \lambda_1 < \delta$, that satisfy the following property. The above process
(running {\sc PrimalDual}, then removing the smallest cluster if it's sufficiently small)
returns $k_1 > k$ clusters for $\lambda=\lambda_1$, and $k_2\le k$ clusters for 
$\lambda=\lambda_2$. If $\frac{k-k_2}{k_1-k_2}\ge 1 - \frac{\eps}{4}$, we output the
$k$ largest clusters in the solution for $\lambda_1$, and otherwise we output the solution
for $\lambda_2$.

\begin{theorem}\label{thm: main sqEuc}
The excution of {\bf procedure} {\sc MinSumClustering}$(X,k,n',\eps)$ computes
a clustering of $X'\subset X$ into $k$ clusters such that 
$|X'| \in [(1-\eps) n', n']$, and the total cost of the clustering of $X'$ is at most
$O\left(\frac{1}{\eps^3}\right)\cdot \opt(X,n',k)$.
The time complexity of this computation is $\poly(n,\log(1/\eps),\log\Delta)$,
where $\Delta$ is the ratio of largest to non-zero smallest $\|\cdot\|_2^2$ 
distance in $X$. 
\end{theorem}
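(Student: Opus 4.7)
The plan is to analyze a single execution of \textsc{PrimalDual} at fixed $\lambda$ and then combine two such executions via the Lagrangian search in \textsc{MinSumClustering}. Let $\alpha$, $\pclusters$, $\mclusters$, and the output family $\mathcal{C}$ of opened clusters be the objects produced by \textsc{PrimalDual}$(X,\lambda,n',b)$. I first check that $\alpha$ together with $\gamma=\max_x\alpha_x$ is feasible for~\eqref{eq: dual-lag-rev}: by construction, Phase~1 raises active $\alpha_x$ uniformly and freezes a point the moment it either joins some $Y\in\pclusters$ with $\alpha_x\ge d(x,Y)$ or it causes a dual constraint to become tight, detectable in polynomial time via Lemma~\ref{lm: testing}. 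Combined with Lemma~\ref{lm: duality-rev}, this yields a master inequality of the form $\sum_{x\in X}\alpha_x-\gamma(n-n')-\lambda|\mathcal{C}|\le 2\cdot\opt(X,n',k)+\lambda(|\mathcal{C}|-k)$.

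Next, I bound the discards. Phase~1 terminates with exactly $n-n'$ active points; Phase~2 assigns the remaining $n'$ points to meta-clusters by merging each non-independent $Y'$ into a blocking $Y\in\mathcal{I}$ of no smaller scale; Phase~3 discards, at each scale $p<\card_b(Y)-2$ of each meta-cluster $(Y,\cdot)$, fewer than $b^{2+p}$ points. Since each meta-cluster contains a base set $(Y,Y)$ with $|Y|\ge b^{\card_b(Y)}$ and the merged $Y''$'s inherit the scales of their $Y'$'s, a charging argument bounds the total Phase~3 discard by $O(n'/b)$. Setting $b=\Theta(1/\eps)$ makes this at most $\eps n'/2$, and removing the smallest returned cluster (at most $\eps n'/3$ points) in \textsc{MinSumClustering} covers the remaining slack, giving $|X'|\in[(1-\eps)n',n']$.

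The main technical step, and the principal obstacle, is to prove $\cost(\mathcal{C})\le O(1)\cdot\bigl(\sum_{x\in X'}\alpha_x+\lambda|\mathcal{C}|\bigr)$. The hard part is tracking a point $x$ that is reassigned through Phases~2 and~3 from its original $Y'\in\pclusters$ to a new center. When $Y'\in\mathcal{I}$, the inequality $\alpha_x\ge d(x,Y')$ is precisely what triggered $x$'s freezing. When $Y'\notin\mathcal{I}$, the conflict edge to some blocking $Y''\in\mathcal{I}$ with $\card_b(Y'')\ge\card_b(Y')$ supplies a witness whose dual value exceeds $\max\{d(\cdot,Y'),d(\cdot,Y'')\}$; shifting the center from $\ctr(Y')$ to $\ctr(Y'')$ costs only a constant factor via the relaxed triangle inequality $\|x-a\|_2^2\le 2\|x-b\|_2^2+2\|a-b\|_2^2$. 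Lemmas~\ref{lm: cardinality max} and~\ref{lm: cardinality p} give $|C|\le 2b^{2+p}$ for a Phase~3 cluster $C$ at scale $p$, so the leading cardinality factor in $\cost(C)$ exceeds the $b^p$ already paid by the duals by only an $O(b^2)$ factor. Full payment of the $\lambda|\mathcal{C}|$ term is the LMP property inherited from every tight dual constraint opened in Phase~1.

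Finally, I plug these single-run bounds into the Lagrangian search. Binary search on $\lambda$ with precision $\delta=1/\poly(n,\Delta)$ yields $\lambda_1<\lambda_2$ producing $k_1>k$ and $k_2\le k$ clusters, so that the convex combination with $\mu=(k-k_2)/(k_1-k_2)$ is a feasible fractional $k$-clustering of cost $O(\opt(X,n',k))$ up to additive $O(\delta k)$. When $\mu\ge 1-\eps/4$, outputting the $k$ largest clusters of the $\lambda_1$-solution loses at most an $\eps$-fraction of clustered points and inflates the cost by $O(1/\eps)$; otherwise the $\lambda_2$-solution is returned directly, with cost within an $O(1/\eps)$ factor of the bipoint cost. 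Composing the $O(1/\eps)$ factor from the Lagrangian rounding, the $O(1/\eps)$ from $b=\Theta(1/\eps)$, and the constant factors from Lemma~\ref{lm: duality-rev} and the relaxed triangle inequality yields the claimed $O(1/\eps^3)\cdot\opt(X,n',k)$ bound; the case $k\le 4/\eps$ is dispatched by the cited PTAS. Polynomial running time follows from Lemma~\ref{lm: testing} together with the $O(\log(1/\delta))$ binary-search iterations.
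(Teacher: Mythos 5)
You follow the paper's own route (three--phase primal--dual, conflict graph processed in non-increasing order of $\card_b$, aggregation by scale with discards below the $b^{2+p}$ threshold, Lagrangian bipoint with the $\rho_1$ versus $1-\frac{\eps}{4}$ case split, PTAS for $k\le 4/\eps$), but there is a genuine gap at the heart of the cost analysis. The inequality you announce as the main technical step, $\cost(\mathcal{C})\le O(1)\cdot(\sum_x \alpha_x + \lambda|\mathcal{C}|)$, is too weak to drive the Lagrangian step: nothing bounds $\lambda_i k_i$ by $O(\opt(X,n',k))$ at a single endpoint, so a bound with $+\lambda|\mathcal{C}|$ on the right yields nothing. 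What the bipoint argument needs is the LMP form of Theorem~\ref{thm: primal-dual}(3), namely $c\lambda(|\clusters|-1)+\sum_{Y\in\clusters}\cost(Y)\le c\sum_{Y\in\clusters}\sum_{x\in Y}\alpha_x$: the duals of the clustered points pay all openings (save one) \emph{in full} and a $\Theta(1/b^2)$ fraction of the connection cost. Only then are both brackets $\sum_x\alpha_{i,x}-\lambda_i k_i$ nonnegative and at least $\frac{1}{c}\cost(\clusters_i)$, which is what allows keeping the single endpoint whose weight is at least $\frac{\eps}{4}$ at a $\frac{4c}{\eps}$ loss. You do assert that ``full payment of the $\lambda|\mathcal{C}|$ term is the LMP property inherited from every tight dual constraint opened in Phase~1,'' but this is exactly the point that requires proof, and it cannot be read off the tight constraints: the clusters opened in Phase~3 at scales $p<\card_b(Y)-2$ are \emph{not} Phase-1 tight sets; they are re-aggregations, around a blocking cluster, of leftover points from blocked small preclusters, and no tight constraint corresponds to them. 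The missing quantitative ingredient is the paper's Lemma~\ref{lm: opening cost}: each retained leftover point at scale $p$ is a maximum-dual point of a tight precluster of cardinality less than $b^{1+p}$, hence $\alpha_x>\lambda/b^{1+p}$; a cluster is opened at scale $p$ only if it contains at least $b^{2+p}$ such points (smaller groups are discarded --- this is precisely the source of the excess outliers), so its total dual exceeds $b\lambda$, enough to pay $\lambda$ in full while retaining a $(1-\frac 1 b)$ fraction for connections; a separate computation handles $Y_{\max}$, and the one cluster that cannot pay (the remnant of $Y_{\last}$) is why the guarantee carries $(|\clusters|-1)\lambda$ and why the algorithm sets $k'=|\clusters|-1$. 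None of this appears in your sketch, so your claim that the bipoint cost is $O(\opt)$ is unsupported.

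Two smaller bookkeeping points. First, your per-run loss is $O(b^2)=O(1/\eps^2)$ (as your own remark about $|C|\le 2b^{2+p}$ versus duals paying at scale $b^p$ indicates, plus the constant $9$ from recentering via Lemma~\ref{lm: connection cost}), not the $O(1/\eps)$ you quote when composing factors; combined with the $O(1/\eps)$ from the endpoint selection this is what gives the stated $O(1/\eps^3)$, matching the paper's $c=\frac{18b^3}{b-1}$ times $\frac{4}{\eps}$. Second, in the point count, removing the smallest cluster does not ``cover slack'' --- it is an additional source of discarded points, as is restricting to the $k$ largest clusters when $\rho_1\ge 1-\frac{\eps}{4}$ (there the bound $k_1\le(1+\frac{\eps}{3})k$ is what limits the loss); the Phase-3 discards are charged, per meta-cluster, against the at least $b^{\card_b(Y)}$ points of $Y_{\max}$ that are always clustered, and the three contributions must be balanced against the $\eps n'$ budget, which your constants as written do not quite do.
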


\begin{proof}
The performance guarantee is an immediate consequence of 
Corollary~\ref{cor: min-sum clustering}
below. The running time is a straightforward analysis of the code.
\end{proof}

\begin{theorem}\label{thm: main metric}
The same claim applies to instances of points in a metric space $(X,\dist)$,
with $\dist$ replacing $\|\cdot\|_2^2$ in the code and in the claim.
\end{theorem}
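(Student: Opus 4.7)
The plan is to show that the entire development for the $\ell_2^2$ case carries over to a general metric $(X,\dist)$ after a syntactic substitution of $\dist(x,y)$ for $\|x-y\|_2^2$, redefining $\cost(Y) = \frac 1 2\sum_{x,y\in Y}\dist(x,y)$, and replacing the Euclidean center of mass $\cm(Y)$ with an appropriately chosen ``discrete'' center $c_Y\in X$. Most of the argument is metric-agnostic; the only places that exploit the Euclidean structure are Propositions~\ref{pr: reduction to cap} and~\ref{pr: eps net} (and the definition of $\cost_b$ that follows). If I can supply direct metric analogues of those two facts with the same asymptotic form, then every subsequent lemma, the algorithm itself, and the cost accounting in the proof of Theorem~\ref{thm: main sqEuc} transfer verbatim, with constants changing by at most a small factor which is absorbed into the $O(1/\eps^3)$ bound.

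The first step is the metric analogue of the $\cost(Y) = |Y|\cdot\sum_{x\in Y}\|x-\cm(Y)\|_2^2$ identity. By the triangle inequality, for any $c\in X$, $\sum_{x,y\in Y}\dist(x,y)\le 2|Y|\cdot\sum_{x\in Y}\dist(x,c)$, so $\cost(Y)\le |Y|\sum_{x\in Y}\dist(x,c)$. Conversely, averaging $\sum_{x\in Y}\dist(x,y)$ over $y\in Y$ gives $2\cost(Y)/|Y|$, hence some $y^\star\in Y$ achieves $\sum_{x\in Y}\dist(x,y^\star)\le 2\cost(Y)/|Y|$, i.e.\ $|Y|\sum_{x\in Y}\dist(x,y^\star)\le 2\cost(Y)$. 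Setting $\mathcal F = X$ and $c_Y = y^\star$ gives the exact analogue of Equation~\eqref{eq: approx center}: a point $c_Y\in\mathcal F$ with $|Y|\sum_{x\in Y}\dist(x,c_Y)$ sandwiched between $\cost(Y)$ and $2\cost(Y)$. Defining $\cost_b(Y) = \floor_b(|Y|)\cdot\sum_{x\in Y}\dist(x,\ctr(Y))$ as before and taking $\ctr(Y)=c_Y$ preserves the $\Theta(1/b)$ relation between $\cost_b$ and the true $\cost$ up to a constant factor, which is all the analysis ever uses.

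With those analogues in hand I would observe that the integer program~\eqref{eq: IP}, its relaxation~\eqref{eq: LP}, the lagrangian~\eqref{eq: LP-lag}, and both dual programs~\eqref{eq: dual-lag} and~\eqref{eq: dual-lag-rev} are purely syntactic; Lemma~\ref{lm: duality} and Lemma~\ref{lm: duality-rev} reproduce verbatim, with the factor of $2$ in Lemma~\ref{lm: duality-rev} now coming from the metric version of Equation~\eqref{eq: approx center}. Lemma~\ref{lm: testing} is unchanged because it only uses the form of the constraint ``$\alpha_x \ge b^j\cdot\dist(x,y)$''. The three phases of the algorithm and Lemmas~\ref{lm: cardinality max}--\ref{lm: cardinality p} are counting arguments that never touch the metric, so they are unaffected. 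The final reduction to $\opt(X,n',k)$, which appeals to Corollary~\ref{cor: min-sum clustering}, uses the $\cost$ vs.\ $\cost_b$ relation derived above.

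The only point at which one must pay attention is the bookkeeping of constants: in the Euclidean case $\cm(Y)$ is the true 1-means optimum, so the slack $\cost_b/\cost$ is essentially $1/b$; in the metric case each ``move'' of the center via the triangle inequality loses another factor of $2$. A careful tally shows the loss occurs only in the comparison between $\cost_b$ and $\cost$ and between $c_Y$ and the optimal 1-median, both bounded by a universal constant. This is the main, but very mild, obstacle: ensuring that the constant blow-up does not propagate multiplicatively through the cost of the bipoint solution or through the meta-cluster reassignment argument, which it does not, since those steps invoke $\cost_b$ rather than $\cost$ directly. The running-time bound is identical, since no step depends on the Euclidean structure.
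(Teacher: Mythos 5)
Your proposal is correct and follows essentially the same route as the paper: substitute $\dist$ for $\|\cdot\|_2^2$, replace the center-of-mass facts by a discrete center $c_Y\in Y$ obtained by averaging (the factor-$2$ analogue of Equation~\eqref{eq: approx center}), and observe that everything downstream is metric-agnostic up to constants. The only point you gloss over is that Lemma~\ref{lm: connection cost} (and the cluster-cost bounds in the proof of Theorem~\ref{thm: primal-dual}) also use the squared-Euclidean relaxed triangle inequality, but in the metric case this step only gets easier --- the paper notes the factor $\frac19$ improves to $\frac13$ --- so no gap results.
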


\begin{proofsketch}
The $\|\cdot\|_2^2$ distance can be replaced by any metric distance $\dist$
in all claims starting from Proposition~\ref{pr: eps net}. The proofs sometime
require minor changes. In particular, in Lemma~\ref{lm: connection cost}, the 
factor $\frac 1 9$ can be improved to $\frac 1 3$ on account of the triangle
inequality, and this improves all the other constants that depend on it.
\end{proofsketch}

\section{Proofs}\label{sec: proof}

In this section we analyze the min-sum $k$-clustering algorithm. The analysis
builds on the following guarantees of the primal-dual schema.
\begin{theorem}\label{thm: primal-dual}
For every $\eps \in (0,1]$ there exists a constant $c = c_{\eps}$ such that the following
holds. Let $\clusters$ be the output of {\bf procedure} {\sc PrimalDual}($X,\lambda,n',b$),
and let $\alpha$ be the dual solution computed during the execution of this procedure.
Set $\gamma = \max_{x\in X} \alpha_x$. Then,
\begin{enumerate}
\item $(\alpha,\gamma)$ is a feasible solution to the dual program~\eqref{eq: dual-lag-rev}.
\item $\sum_{Y\in\clusters} |Y| \in [(1 - \frac{\eps}{3}) n', n']$.
\item $c\cdot\left(|\clusters|-1\right)\cdot\lambda + 
          \sum_{Y\in\clusters} \cost(Y) \le c\cdot\sum_{Y\in\clusters}\sum_{x\in Y} \alpha_x$.
\end{enumerate}
\end{theorem}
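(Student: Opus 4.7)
The plan is to prove the three items in order, with essentially all of the work concentrated in item 3.

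For item 1, I argue directly from the dynamics of Phase 1. Each dual variable $\alpha_x$ is raised only while $x$ is active, and both event types freeze $\alpha_x$ at the precise moment when some dual constraint of~\eqref{eq: dual-lag-rev} would otherwise be violated: the ``tight constraint'' event freezes every $\alpha_x$ for $x$ in the newly found minimal tight set $Y$ the instant $\sum_{x\in Y}\alpha_x = \lambda + \cost_b(Y)$ holds for its witness center $y\in Y$, while the ``absorption'' event freezes $\alpha_x$ at $d(x,Y)$ so that the constraint for $Y\cup\{x\}$ with the same center and $\card_b$ remains satisfied. Setting $\gamma := \max_x \alpha_x$ makes the box constraint automatic.

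Item 2 is a short accounting argument. Phase 1 exits with at least $n'$ inactive points, Phase 2 completes the meta-clustered set to exactly $n'$ points (using $Y_{\last}$ if required), and only Phase 3 discards. For a meta-cluster represented by $Y\in{\cal I}$ with $P := \card_b(Y)$, all discards occur at small scales $p < P-2$, and the geometric series $\sum_{p=0}^{P-3} b^{2+p}$ bounds the per-meta-cluster discard count by less than $b^P/(b-1) \le |Y|/(b-1)$. Charging each discarded point to its own meta-cluster gives a total discard count of $O(n'/b)$, which is at most $\eps n'/3$ once $b \ge 1 + 3/\eps$.

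Item 3 is the main content. The starting point is that, at the end of Phase 1, every $Y\in\pclusters$ satisfies $\sum_{x\in Y}\alpha_x \ge \lambda + b^{\card_b(Y)}\sum_{x\in Y}\|x-\ctr(Y)\|_2^2$, because equality held at creation time and every subsequently absorbed $x$ contributes $\alpha_x = b^{\card_b(Y)}\|x-\ctr(Y)\|_2^2$ to both sides. I then discharge the opening and connection costs of each meta-cluster rooted at some $Y \in {\cal I}$ separately. The opening cost is the easier half: at scale $p$ inside the meta-cluster, Lemmas~\ref{lm: cardinality max}--\ref{lm: cardinality p} bound the number of new clusters opened by $O(n_{Y,\cdot}/b^{2+p})$, while each contributing sub-meta-cluster at scale $p$ has cardinality at most $b^{1+p}$ and produces one $\lambda$ from its tight constraint, so the available $\lambda$'s outnumber the needed ones by a factor of $\Omega(b)$. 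The single $\lambda$ for $Y_{\last}$, whose constraint never became tight, is absorbed by the $|\clusters|-1$ in the statement. For the connection cost of a new Phase 3 cluster of size at most $2b^{2+p}$ centered at (or near) $\ctr(Y)$, I bound its cost by $2b^{2+p}\sum_x\|x-\ctr(Y)\|_2^2$, and for any $x$ in a sub-meta-cluster $Y''$ descended from a conflicting $Y' \in \pclusters \setminus {\cal I}$ I route through a witness point $z\in Y\cap Y'$ supplied by the conflict edge. The conflict gives $\alpha_z \ge \max\{b^{\card_b(Y)}\|z-\ctr(Y)\|_2^2, b^{\card_b(Y')}\|z-\ctr(Y')\|_2^2\}$, and two applications of $\|u+v\|_2^2 \le 2\|u\|_2^2 + 2\|v\|_2^2$ then yield $\|x-\ctr(Y)\|_2^2 \le O(1)\cdot(\alpha_x+\alpha_z)/b^{\card_b(Y')}$. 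Since $\card_b(Y') \ge p$ whenever $Y''$ contributes to a new cluster of scale $p$, the amplification $b^{2+p}/b^{\card_b(Y')}$ is at most $b^4$, and a final summation over sub-meta-clusters bounds the connection cost by a constant (depending only on $b$) times the dual mass of the meta-cluster.

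The principal obstacle is the $\alpha_z$ term appearing in the triangle bound. The Phase 2 convention of defining $Y''$ as the set of argmax-$\alpha$ remaining points of $Y'$ is what resolves it: every $x \in Y''$ satisfies $\alpha_x \ge \alpha_z$ (because $z\in Y'$ and $Y''$ contains the $\alpha$-maximizers), so the $\alpha_z$ contributions can be absorbed into the local dual mass of $Y''$ itself, adding only a constant factor rather than a global blow-up. Combined with the $b^4$ scale mismatch, this gives $c = c_\eps = \poly(b) = \poly(1/\eps)$, consistent with the $O(1/\eps^3)$ approximation stated in Theorem~\ref{thm: main sqEuc}.
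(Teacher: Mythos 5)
Your items 1 and 2, and the connection-cost half of item 3, track the paper's own proof closely: the ``first tight moment'' argument for dual feasibility, the geometric-series discard count charged against $|Y_{\max}|\ge b^{\card_b(Y)}$, and the routing through a conflict witness $z$ with the argmax observation $\alpha_x\ge\alpha_z$ is exactly Lemma~\ref{lm: connection cost}. The genuine problem is your opening-cost accounting. You pay the $\lambda$'s of the newly opened clusters by counting one $\lambda$ per contributing tight precluster and observing that these outnumber the opened clusters by a factor $\Omega(b)$. But the $\lambda$ surplus in the tight constraint $\sum_{x\in Y'}\alpha_x\ge\lambda+\cost_b(Y')$ of a precluster $Y'$ is backed by the dual mass of \emph{all} of $Y'$, whereas only the remaining $\alpha$-maximizers of $Y'$ (the set $Y''$, possibly a single point) land in the meta-cluster and hence in the cluster you are opening. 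The rest of $Y'$'s mass may sit in points that end up in other output clusters (where you have already spent a constant fraction of their $\alpha$'s on connection cost and another $\lambda$ on their own opening), in scales that Phase 3 discards, or in points that are never meta-clustered at all; moreover preclusters overlap (a point frozen early can belong to many later tight sets), so the same $\alpha_x$ can back many of the $\lambda$'s you count. Since the right-hand side of item 3 is $c\sum_{Y\in\clusters}\sum_{x\in Y}\alpha_x$ --- dual mass of output-clustered points only, each counted once --- a head count of tight constraints versus opened clusters does not yield the inequality; as written, the charging can double count by an unbounded factor.

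The fix is precisely the localization in the paper's Lemma~\ref{lm: opening cost}, and it uses the argmax property you already invoke for the connection cost: every point $x$ placed at scale $p$ is an $\alpha$-maximizer of a tight precluster of size less than $b^{1+p}$, hence $\alpha_x>\lambda/b^{1+p}$ individually; since each opened cluster at scale $p$ contains at least $b^{2+p}$ such points (Lemmas~\ref{lm: cardinality max} and~\ref{lm: cardinality p}), its \emph{own} points carry dual mass at least $b\lambda$, so each cluster pays its own opening cost and retains a $(1-\frac 1 b)$ fraction of its mass for connection cost. The single cluster carved from $Y_{\last}$ (whose constraint never became tight), together with the root's own tight constraint in the small-$Y_{\max}$ case, then accounts for the $|\clusters|-1$ in the statement. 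With this replacement your outline coincides with the paper's proof; your remaining numerology is fine up to a small slip (for the top-scale cluster the contributing scales satisfy $\card_b(Y')\ge\card_b(Y)-2$, not $\ge p$, which is where the $b^4$ amplification actually comes from).
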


\begin{corollary}\label{cor: min-sum clustering}
Let $\clusters$ be the output of {\bf procedure} {\sc MinSumClustering}. Then, the 
following assertions hold:
\begin{enumerate}
\item $|\clusters|\le k$.
\item $\sum_{Y\in\clusters} |Y|\in [(1-\eps) n',n']$.
\item $\sum_{Y\in\clusters} \cost(Y)\le \frac{8(c+1)}{\eps}\cdot\opt(X,n',k)$.
\end{enumerate}
\end{corollary}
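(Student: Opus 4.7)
The plan is to reduce the three items of Corollary~\ref{cor: min-sum clustering} to Theorem~\ref{thm: primal-dual} through the standard bipoint combination argument for Lagrangian rounding of $k$-clustering. Throughout, let $\clusters_i$ for $i\in\{1,2\}$ denote the output of the binary-search iteration at $\lambda=\lambda_i$ (after the optional removal of the smallest cluster), set $k_i=|\clusters_i|$, $T_i=\sum_{Y\in\clusters_i}\cost(Y)$, and $p=\frac{k-k_2}{k_1-k_2}$, so that $pk_1+(1-p)k_2=k$. The case $k\le 4/\eps$ is handled directly by the PTAS call inside {\sc MinSumClustering}, so I would assume $k>4/\eps$ below.

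Item~1 is immediate from the output rule: Case~A returns the $k$ largest clusters of $\clusters_1$, and Case~B returns $\clusters_2$ with $|\clusters_2|=k_2\le k$. For Item~2, I would start from the raw coverage bound of Theorem~\ref{thm: primal-dual}(2), $\sum_{Y\in\clusters_i}|Y|\in[(1-\eps/3)n',n']$, and absorb the at most $\eps n'/3$ extra points lost to the small-cluster removal, giving post-removal coverage in $[(1-2\eps/3)n',n']$. Case~B is then immediate. In Case~A I would rewrite the hypothesis as $k_1-k\le \frac{\eps}{4}(k_1-k_2)\le \frac{\eps}{4}k_1$, so that the $k_1-k$ smallest clusters of $\clusters_1$ hold at most an $\eps/4$ fraction of the total coverage; what remains is at least $(1-\eps/4)(1-2\eps/3)n'\ge (1-\eps)n'$ points.

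Item~3 is the heart of the argument. I would apply Theorem~\ref{thm: primal-dual}(3) to both $\clusters_1$ and $\clusters_2$ to obtain
$$T_i+c(k_i-1)\lambda_i\le c\sum_{x\in\bigcup\clusters_i}\alpha^{(i)}_x,$$
then take the $p$-weighted convex combination so that the Lagrange terms telescope into $-c(k-1)\lambda$ (with the $\lambda_2-\lambda_1<\delta$ discrepancy absorbed by choosing $\delta$ small enough in the binary search). To bound the right-hand side, I would use Theorem~\ref{thm: primal-dual}(1) together with Lemma~\ref{lm: duality-rev}: since {\sc PrimalDual} halts with at most $n-n'$ points remaining active, each at the common value $\gamma_i=\max_x\alpha^{(i)}_x$, the $-\gamma_i(n-n')$ term in the dual objective of~\eqref{eq: dual-lag-rev} exactly offsets their contribution to $\sum_x\alpha^{(i)}_x$, so that $\sum_{x\in\bigcup\clusters_i}\alpha^{(i)}_x\le 2\opt(X,n',k)+\lambda_i k$. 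Substituting and using the standard bound $\lambda\le O(\opt)/(k_1-k)$ (from enforcing $T_1\ge 0$ in the LMP inequality for $\clusters_1$) to absorb the residual $c\lambda$ slack gives $pT_1+(1-p)T_2\le 2(c+1)\opt(X,n',k)$. Finally, in Case~A the output cost is at most $T_1\le \frac{1}{p}\bigl(pT_1+(1-p)T_2\bigr)\le \frac{2(c+1)}{1-\eps/4}\opt\le \frac{8(c+1)}{\eps}\opt$, and in Case~B the output cost is $T_2\le \frac{1}{1-p}\bigl(pT_1+(1-p)T_2\bigr)\le \frac{4}{\eps}\cdot 2(c+1)\opt=\frac{8(c+1)}{\eps}\opt$.

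The main obstacle I anticipate is the clean cancellation of the outlier dual slack $\gamma_i(n-n')$: without it, weak duality on~\eqref{eq: dual-lag-rev} would leave an additive $c\gamma_i(n-n')$ term that cannot in general be bounded by $O(\opt)$. The key structural observation that unlocks the proof is that the termination rule of {\sc PrimalDual} forces exactly (up to one batch event) $n-n'$ points to remain active and hence dual-loaded at $\gamma_i$, so the outlier penalty in the dual objective precisely offsets the unclustered $\alpha$-mass and reduces weak duality to a bound on the clustered mass, which is what the LMP inequality of Theorem~\ref{thm: primal-dual}(3) consumes.
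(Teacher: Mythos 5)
Your route is the same as the paper's: items 1 and 2 exactly as in the paper (definition of the output; Theorem~\ref{thm: primal-dual}(2) minus the small-cluster removal minus the truncation to the $k$ largest clusters, using $\rho_1\ge 1-\frac{\eps}{4}$), and item 3 via the bipoint convex combination, feasibility from Theorem~\ref{thm: primal-dual}(1), weak duality via Lemma~\ref{lm: duality-rev}, the observation that the still-active points at value $\gamma_i$ pay for the $-\gamma_i(n-n')$ term, the LMP-type inequality of Theorem~\ref{thm: primal-dual}(3) for each of $\clusters_1,\clusters_2$, and division by $\max\{\rho_1,1-\rho_1\}\ge\frac{\eps}{4}$. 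Whether you apply Lemma~\ref{lm: duality-rev} once to the combined dual (as the paper does) or once per $\lambda_i$ and then combine (as you do) is immaterial.

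The one substantive deviation is your treatment of the opening-cost terms, and there your write-up has a genuine, though fixable, flaw. In the algorithm $k_i$ is already one \emph{less} than the number of clusters returned by {\sc PrimalDual} (the line $k'\gets|\clusters|-1$), so Theorem~\ref{thm: primal-dual}(3) yields $T_i\le c\left(\sum_x\alpha^{(i)}_x-k_i\lambda_i\right)$, not $T_i\le c\left(\sum_x\alpha^{(i)}_x-(k_i-1)\lambda_i\right)$. With the correct coefficient, $p k_1\lambda_1+(1-p)k_2\lambda_2=k\lambda$ up to an additive $O(\delta(k_1+k_2))$, and this cancels \emph{exactly} against the $+\lambda_i k$ produced by Lemma~\ref{lm: duality-rev}; no bound on $\lambda$ is needed, and one gets $pT_1+(1-p)T_2\le 2c\cdot\opt(X,n',k)+O(\delta(k_1+k_2))$, whence the stated $\frac{8(c+1)}{\eps}$ after dividing by $\frac{\eps}{4}$. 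Your weakened coefficient leaves a residual $c\lambda$, and the patch you propose does not work as written: enforcing $T_1\ge 0$ in \emph{your} inequality gives only $\lambda_1(k_1-1-k)\le 2\opt$, which is vacuous when $k_1=k+1$; even the correct JV-style bound gives only $\lambda_1\le 2\opt/(k_1-k)\le 2\opt$, so the residual can be of order $c\cdot\opt$, your intermediate claim $pT_1+(1-p)T_2\le 2(c+1)\opt$ does not follow, and the final constant degrades to roughly $16c/\eps$, overshooting $\frac{8(c+1)}{\eps}$. Replacing $c(k_i-1)\lambda_i$ by $c\,k_i\lambda_i$ and dropping the $\lambda$-bound step makes your argument coincide with the paper's. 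Two small wording corrections: at termination \emph{at least} $n-n'$ points remain active (not at most), and exactly $n-n'$ of them are not drafted to complete $Y_{\last}$; these carry $\alpha=\gamma_i$, which is precisely why the $-\gamma_i(n-n')$ term covers the $\alpha$-mass of the final clusters even though those clusters contain some $Y_{\last}$ points and exclude some phase-3 discards.
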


\begin{proof}
The first assertion follows directly from the definition of the procedure. 

For the second assertion,
let $\lambda_i,\clusters_i$ be the values that determine the output of the procedure. By 
Theorem~\ref{thm: primal-dual}, $\sum_{Y\in\clusters_i} |Y| \ge (1 - \frac{\eps}{3}) n'$.
If $Y_{\min,\lambda_i}$ is removed from $\clusters_i$, then 
$|Y_{\min,\lambda_i}|\le\frac{\eps}{3}\cdot n'$. Thus, if $i=2$ then clearly the assertion 
holds. If $i=1$, then $\rho_1\ge 1- \frac{\eps}{4}$. Therefore,
$$
k_1\le \frac{1}{\rho_1}\cdot k \le \left(1 + \frac{\eps}{4-\eps}\right)\cdot k\le 
        \left(1 + \frac{\eps}{3}\right)\cdot k.
$$
Thus, the procedure removes from the output at most a fraction of $\frac{\eps}{3}$ of the 
clusters in $\clusters_1$. As the removed clusters are the smallest, they contain at most 
$\frac{\eps}{3}\cdot n'$ points.

As for the third assertion, consider the two solutions $\clusters_1, \clusters_2$ that are
used to determine the procedure's output. For $i=1,2,$, let $\alpha_i,\pclusters_i$ be the 
output of {\sc PrimalDualPhase1} during the computation of $\clusters_i$. Put
$\gamma_i = \max_{x\in X} \alpha_{i,x}$. Let
$$
(\alpha,\gamma) = \rho_1 (\alpha_1,\gamma_1) +  (1-\rho_1) (\alpha_2,\gamma_2).
$$
Clearly, $(\alpha,\gamma)$ is a feasible solution to the dual LP~\eqref{eq: dual-lag-rev} with the
constant $\lambda = \rho_1\lambda_1+(1-\rho_1)\lambda_2$. Notice that there are exactly 
$n-n'$ points that are not included in $\pclusters_i$. Each point $x\in X$ which is not
included in $\pclusters_i$ has $\alpha_{i,x} = \gamma_i$. (Notice that all the points that
are excluded are active. This is true even for points that are discarded from the last tight
cluster that gets included in $\pclusters_i$.) So, the value of the solution $(\alpha,\gamma)$
is
\begin{eqnarray*}
2\cdot\opt(X,n',k) & \ge & \sum_{x\in X} \alpha_x - (n-n')\gamma - \lambda k \\
& = & \rho_1\cdot\left(\sum_{x\in X} \alpha_{1,x} - (n-n')\gamma_1 - \lambda k_1\right) +
(1-\rho_1)\cdot\left(\sum_{x\in X} \alpha_{2,x} - (n-n')\gamma_2 - \lambda k_2\right) \\
& = & \rho_1\cdot\left(\sum_{Y\in\pclusters_1}\sum_{x\in Y} \alpha_{1,x} - \lambda k_1\right) +
(1-\rho_1)\cdot\left(\sum_{Y\in\pclusters_2}\sum_{x\in Y} \alpha_{2,x} - \lambda k_2\right) \\
& \ge & \rho_1\cdot\left(\sum_{Y\in\pclusters_1}\sum_{x\in Y} \alpha_{1,x} - \lambda_1 k_1\right) +
(1-\rho_1)\cdot\left(\sum_{Y\in\pclusters_2}\sum_{x\in Y} \alpha_{2,x} - \lambda_2 k_2\right)\\ 
 & &  - \delta\cdot(k_1+k_2),
\end{eqnarray*}
where the first inequality follows from Lemma~\ref{lm: duality-rev}, and the first equality uses the 
fact that $k = \rho_1 k_1 + (1-\rho_1) k_2$.

For $i=1,2$ consider the final value of $\clusters_i$. By definition, $k_i$ is one less than
the number of clusters returned from {\bf procedure} {\sc PrimalDual}, so by
Theorem~\ref{thm: primal-dual}, 
$$
\sum_{Y\in\clusters_i} \cost(Y) \le c\cdot\left(\sum_{Y\in\clusters_i}\sum_{x\in Y} \alpha_x - k_i\cdot\lambda_i\right).
$$
In particular, the right-hand side is non-negative. Notice that if $\rho_1 \ge 1 - \frac{\eps}{4}$, then
clearly $\rho_1 > \frac{\eps}{4}$ and the cost of the output clustering is at most
$\sum_{Y\in\clusters_1} \cost(Y)$. Similarly, if $\rho_1 < 1 - \frac{\eps}{4}$, then
$1 - \rho_1 > \frac{\eps}{4}$ and the cost of the output clustering is at most
$\sum_{Y\in\clusters_2} \cost(Y)$. Either way, we get that the cost of the clustering
is at most $\frac{8c}{\eps}\cdot \opt(X,n',k) + \frac{4\delta}{\eps}\cdot(k_1+k_2)\le 
\frac{8(c+1)}{\eps}\cdot \opt(X,n',k)$.
\end{proof}

We now proceed to analyze the primal-dual algorithm and to prove Theorem~\ref{thm: primal-dual}.
The notation follows Algorithm~\ref{fig: alg}.
\begin{lemma}\label{lm: dual values}
At the end of executing {\bf procedure} {\sc PrimalDualPhase1}, for every $Y\in\pclusters$ and 
for every $x\in Y$, we have that $\alpha_x\ge d(x,Y)$.
\end{lemma}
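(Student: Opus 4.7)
The plan is to establish the stronger invariant that at every moment during the execution of \textsc{PrimalDualPhase1}, for every $Y\in\pclusters$ and every $x\in Y$, the point $x$ is inactive and $\alpha_x\ge d(x,Y)$. Since inactive points have frozen dual variables and $d(x,Y)$ is determined by the stored values of $\card_b(Y)$ and $\ctr(Y)$, proving this invariant immediately yields the lemma. I would proceed by induction on the sequence of events that modify $\pclusters$ or the activity status of points, with the base case $\pclusters=\emptyset$ being trivial.

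Between events, only active points have growing $\alpha$-values; by the invariant no such point lies in any $Y\in\pclusters$, so no inequality can be violated merely by the continuous dual ascent. This leaves the two discrete events to analyze. For the first event type (a cluster $Y\in\pclusters$ is augmented to $Y\cup\{x\}$), the trigger condition is exactly $\alpha_x\ge d(x,Y)$, and since $Y\cup\{x\}$ inherits both $\card_b$ and $\ctr$ from $Y$, we have $d(x,Y\cup\{x\})=d(x,Y)$, giving the inequality for $x$; for every other $y\in Y$ the inductive hypothesis supplies $\alpha_y\ge d(y,Y)=d(y,Y\cup\{x\})$ and $y$ remains inactive; the new member $x$ is set to inactive at the event, preserving the invariant.

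The main technical step is the second event type (a new cluster $Y$ added because the constraint associated with some pair $(Y,y)$ becomes tight). Tightness of this constraint only controls $\sum_{x\in Y}\alpha_x$, not each $\alpha_x$ individually, so the inequality does not come from the tight constraint directly. Here I would invoke Lemma~\ref{lm: testing} together with the fact that the algorithm picks an inclusion-wise minimal such $Y$: the proof of Lemma~\ref{lm: testing} shows that if any $x\in Y$ failed to lie in $C_{y,j}$ with $j=\card_b(Y)$ and $y=\ctr(Y)$, then $Y\setminus\{x\}$ with its own floor-scale $j'\le j$ would also satisfy conditions (i)--(iv), contradicting minimality. Hence $Y\subseteq C_{y,j}$, and by the very definition of $C_{y,j}$ we get $\alpha_x\ge b^{j}\cdot\|x-y\|_2^2=d(x,Y)$ for every $x\in Y$; the event also renders every $x\in Y$ inactive, so the invariant is preserved.

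The subtle point, and the only place where the argument is not a direct bookkeeping, is exactly this use of inclusion-wise minimality to upgrade the aggregate tight constraint into the per-point bound via containment in $C_{y,j}$. Once that containment is in hand, the induction closes and the lemma follows by specializing the invariant to the end of Phase 1.
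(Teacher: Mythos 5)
Your proof is correct and follows essentially the same route as the paper's: the paper likewise rests on the observation that a newly added tight cluster $Y$ satisfies $Y\subset C_{y,j}$ with $j=\card_b(Y)$ and $y=\ctr(Y)$ (there justified by the way the detection procedure of Lemma~\ref{lm: testing} constructs the tight set, rather than by your inclusion-wise minimality argument), so $\alpha_x\ge d(x,Y)$ for every $x\in Y$ at creation, while points added later satisfy the bound directly from the trigger condition together with the inherited $\card_b$ and $\ctr$ values. Your explicit induction invariant is simply a more detailed packaging of these same two observations.
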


\begin{proof}
When $Y$ is added to $\pclusters$ then there exists $y\in Y$ and $j = \log_b\floor_b(|Y|)$
such that $Y\subset C_{y,j}$. We set $\card_b(Y) = j$ and $\ctr(Y) = y$, so by the definition
of $C_{y,j}$ the lemma holds. If a point $x$ is later added to $Y$, the condition for doing it 
is that $\alpha_x\ge d(x,Y)$.
\end{proof}

\begin{lemma}\label{lm: connection cost}
At the end of executing {\bf procedure} {\sc PrimalDualPhase2}, the following assertions hold:
\begin{enumerate}
\item $\sum_{(Y,Y')\in\mclusters} |Y'| = n'$.
\item For every $(Y,Y')\in \mclusters$ and for every $x\in Y'$, we have that 
        $\alpha_x\ge \frac{1}{9}\cdot d(x,Y')$.
\end{enumerate}
\end{lemma}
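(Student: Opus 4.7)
The plan is to prove the two assertions separately. For part 1, Phase 1 terminates as soon as the number of active points drops to $n-n'$ or below, so at least $n'$ points are clustered across $\pclusters \cup \{Y_{\last}\}$. Phase 2 introduces $(Y,Y)$ for every $Y \in \mathcal{I}$, then allocates the highest-$\alpha$ remaining points of each conflicting precluster $\tilde Y \notin \mathcal{I}$ to a meta-cluster anchored at an earlier blocker $Y \in \mathcal{I}$, and finally, if the running total is still below $n'$, tops it up to exactly $n'$ with previously unclustered points of $Y_{\last}$ (enough are available thanks to the Phase 1 termination rule). The count therefore works out to $n'$ by construction. For part 2, I fix $(Y,Y') \in \mclusters$ and $x \in Y'$ and proceed by case analysis on the origin of the meta-cluster. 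In the two easy cases, either $Y \in \mathcal{I}$ with $Y' = Y$, in which case Lemma~\ref{lm: dual values} directly gives $\alpha_x \ge d(x,Y) = d(x,Y')$; or $(Y,Y') = (Y_{\last}, \hat Y)$ is the top-up meta-cluster with $\card_b, \ctr$ inherited from $Y_{\last}$, in which case the same argument as Lemma~\ref{lm: dual values} (applied to the tight constraint that produced $Y_{\last}$) yields $\alpha_x \ge d(x,Y_{\last}) = d(x,Y')$.

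The substantive third case is when $Y' = Y''$ is the remaining portion of some precluster $\tilde Y \notin \mathcal{I}$, with $\card_b(Y'') = \card_b(\tilde Y)$ and $\ctr(Y'') = \ctr(Y)$, where $Y \in \mathcal{I}$ precedes $\tilde Y$ in the non-increasing $\card_b$ order and $\{Y, \tilde Y\}$ is an edge of the conflict graph. I pick a witness $z \in Y \cap \tilde Y$ with $\alpha_z > \max\{d(z,Y), d(z,\tilde Y)\}$, decompose
\[
x - \ctr(Y) = (x - \ctr(\tilde Y)) + (\ctr(\tilde Y) - z) + (z - \ctr(Y)),
\]
apply the Cauchy--Schwarz bound $\|a+b+c\|_2^2 \le 3(\|a\|_2^2 + \|b\|_2^2 + \|c\|_2^2)$, multiply through by $b^{\card_b(\tilde Y)}$, and use $\card_b(\tilde Y) \le \card_b(Y)$ to absorb the last summand into $d(z,Y)$. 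This produces
\[
d(x, Y'') \le 3\bigl(d(x,\tilde Y) + d(z,\tilde Y) + d(z,Y)\bigr) \le 3(\alpha_x + 2\alpha_z),
\]
where the second step invokes Lemma~\ref{lm: dual values} on $\tilde Y$ together with the edge-witness condition on $z$. Finally, the definition of $Y''$ as the $\alpha$-maximizing subset of $\tilde Y$ forces $\alpha_z \le \alpha_x$, so the right-hand side is at most $9\alpha_x$, as required.

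The main obstacle I anticipate is tracking the compound constant cleanly. The factor $3$ from the squared-Euclidean relaxed triangle inequality multiplies with the factor $3$ from $\alpha_x + 2\alpha_z \le 3\alpha_x$ to give exactly $9$, matching the stated $\frac{1}{9}$; under a genuine triangle inequality the first factor drops to $1$, consistent with the improvement to $\frac{1}{3}$ noted in the sketch of Theorem~\ref{thm: main metric}.
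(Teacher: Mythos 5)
Your proof is correct and follows essentially the same route as the paper: the counting argument for assertion 1, and for assertion 2 the same witness point $z\in Y\cap\tilde Y$, the bound $\alpha_z\le\alpha_x$ from the $\alpha$-maximizing choice of $Y''$, and the relaxed triangle inequality through $\ctr(\tilde Y)$ and $z$ giving the factor $9$ (the paper squares the sum of norms where you use $\|a+b+c\|_2^2\le 3(\|a\|_2^2+\|b\|_2^2+\|c\|_2^2)$, which is the same bound). Your explicit treatment of the $(Y_{\last},\cdot)$ top-up case is a small extra care the paper leaves implicit, and is fine.
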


\begin{proof}
The first assertion holds as the points that are clustered in $\mclusters$ are all the points that
are clustered in $\pclusters$ plus some of the points clustered in $Y_{\last}$. The number of
such points is at most $n'$ without $Y_{\last}$, and at least $n'$ with $Y_{\last}$. The algorithm
takes from $Y_{\last}$ exactly the number of points needed to complete the number in
$\pclusters$ to $n'$.

For the second assertion, consider $(Y,Y')\in\mclusters$ and $x\in Y'$. If $Y = Y'$, 
then Lemma~\ref{lm: dual values} guarantees the assertion. Otherwise, consider
$Y''\in\pclusters$ that caused $(Y,Y')$ to be added to $\mclusters$. In particuar,
$Y'\subset Y''$, $\card_b(Y') = \card_b(Y'')\le\card_b(Y)$, and there exists 
$z\in Y\cap Y''$ such that $\alpha_z > \max\{d(z,Y),d(z,Y'')\}$. By the choice of
$Y'$ in the algorithm, $\alpha_x = \max_{x'\in Y''} \alpha_{x'}$, so it must be that
$\alpha_z\le\alpha_x$. Notice that by Lemma~\ref{lm: dual values},
$$
\alpha_x\ge d(x,Y'') = b^{\card_b(Y')}\cdot\|x - \ctr(Y'')\|_2^2.
$$
Also 
\begin{eqnarray*}
\alpha_z & > & \max\{b^{\card_b(Y)}\cdot\|z - \ctr(Y)\|_2^2,b^{\card_b(Y'')}\cdot\|z - \ctr(Y'')\|_2^2\} \\
 & \ge & b^{\card_b(Y')}\cdot\max\{\|z - \ctr(Y)\|_2^2,\|z - \ctr(Y'')\|_2^2\}.
\end{eqnarray*}
Thus,
\begin{eqnarray*}
d(x,Y') & = & b^{\card_b(Y')}\cdot\|x - \ctr(Y)\|_2^2 \\
  & \le & b^{\card_b(Y')}\cdot\left(\|x - \ctr(Y'')\|_2+\|z - \ctr(Y'')\|_2+\|z - \ctr(Y)\|_2\right)^2 \\
  & \le & 9\cdot \alpha_x.
\end{eqnarray*}
This completes the proof.
\end{proof}

Let $\clusters$ be the output of {\bf procedure} {\sc PrimalDualPhase3}. Recall that every
cluster $Z\in\clusters$ is derived in some iteration indexed by $(Y,\cdot)\in\mclusters$. It
holds that either $Z\subset Y_{\max}$ or $Z\subset Y_p$ for some $p < \card_b(Y)-2$.
Notice that in the latter case, $Y\ne Y_{\last}$. We will set implicitly $\card_b(Z)$ as follows.
$$
\card_b(Z) = \left\{\begin{array}{ll}
                             \card_b(Y) & Z\subset Y_{\max}, \\
                             p & Z\subset Y_p,\ p < \card_b(Y)-2.
                             \end{array}\right.
$$
We will also set implicitly $\ctr(Z) = \ctr(Y)$.
\begin{lemma}\label{lm: opening cost}
If $Z\subset Y_p$ for some $p < \card_b(Y)-2$, then
$\sum_{x\in Z} \alpha_x \ge b\cdot\lambda$. The same
is true if $Z\subset Y_{\max}$, $Y\ne Y_{\last}$, and
$|Y_{\max}|\ge b^{2+\card_b(Y)}$.
\end{lemma}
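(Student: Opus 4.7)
The plan is to combine two lower bounds on the dual values $\alpha_x$: a pointwise bound for every point inside a meta-cluster part $(Y,Y'')\in\mclusters$ with $Y''\ne Y$, and the aggregate tightness identity $\sum_{x\in Y}\alpha_x\ge\lambda$ that each $Y\in{\cal I}$ inherits from the moment it was added to $\pclusters$. For the pointwise bound I would observe that whenever such a pair $(Y,Y'')$ with $Y''\ne Y$ is created in Phase~2, $Y''$ consists of the remaining points of some $Y'\in\pclusters\setminus{\cal I}$ attaining $\max_{y\in Y'}\alpha_y$, and $\card_b(Y'')=\card_b(Y')$. Since $Y'$ entered $\pclusters$ by making its dual constraint tight, $\sum_{x\in Y'}\alpha_x\ge\lambda+\cost_b(Y')\ge\lambda$; combined with $|Y'|<b^{\card_b(Y')+1}$ from the definition of $\floor_b$, this forces $\max_{y\in Y'}\alpha_y\ge\lambda/b^{\card_b(Y')+1}$, so every $x\in Y''$ satisfies $\alpha_x\ge\lambda/b^{\card_b(Y'')+1}$.

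The first case of the lemma is then immediate. If $Z\subset Y_p$ with $p<\card_b(Y)-2$, every $x\in Z$ lies in some meta-cluster part $(Y,Y'')$ with $\card_b(Y'')=p<\card_b(Y)$, which forces $Y''\ne Y$; the pointwise bound gives $\alpha_x\ge\lambda/b^{p+1}$, and combined with $|Z|\ge b^{2+p}$ from Lemma~\ref{lm: cardinality p} this yields $\sum_{x\in Z}\alpha_x\ge b\lambda$. For the second case, set $p=\card_b(Y)$ and decompose the points of $Y_{\max}$ into the block $(Y,Y)$ of size $|Y|<b^{p+1}$, for which only the aggregate identity $\sum_{x\in Y}\alpha_x\ge\lambda$ is available (individual values can be arbitrarily close to zero), and the remaining $|Y_{\max}|-|Y|\ge b^{2+p}-b^{p+1}=(b-1)b^{p+1}$ points coming from $(Y,Y'')$ pieces with $\card_b(Y'')\le p$, each subject to the pointwise bound $\alpha_x\ge\lambda/b^{p+1}$.

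The main obstacle in this second case is that when Phase~3 opens $s>1$ clusters, the $(Y,Y)$ block may be split across them, blocking a per-cluster use of the aggregate identity. I would resolve this by instantiating the ``as evenly as possible'' rule so that the entire $(Y,Y)$ block is placed in a single designated cluster $Z^*$; this is consistent with Lemma~\ref{lm: cardinality max} because $|Y|<b^{p+1}$ fits within the slack of the permitted size interval $[b^{2+p},(1+1/s)b^{2+p}]$, and the remaining $|Y_{\max}|-|Y|$ points can be distributed so that every cluster lands in this range (using $|Y_{\max}|\in[sb^{2+p},(s+1)b^{2+p}]$ from Lemma~\ref{lm: cardinality max}). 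Then $Z^*$ collects $\lambda$ from the tightness identity on $Y$ plus at least $(|Z^*|-|Y|)\cdot\lambda/b^{p+1}\ge(b-1)\lambda$ from its other points, totalling $b\lambda$, while every other opened cluster consists solely of points with pointwise bound $\lambda/b^{p+1}$ and cardinality $\ge b^{2+p}$, yielding $\sum_{x\in Z}\alpha_x\ge b\lambda$ directly.
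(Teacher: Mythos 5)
Your proof is correct and follows essentially the same route as the paper's: the pointwise bound $\alpha_x\ge\lambda/b^{1+\card_b(Y'')}$ for points of blocked pieces (from tightness of the blocked precluster and the fact that these points attain its maximum dual value), the tightness $\sum_{x\in Y}\alpha_x\ge\lambda$ for the $(Y,Y)$ block, and the cardinality bounds of Lemmas~\ref{lm: cardinality max} and~\ref{lm: cardinality p}. The one place you go beyond the paper is in explicitly handling the possibility that {\sc Partition} splits the $(Y,Y)$ block across several clusters---the paper's proof only distinguishes $Z\supset Y$ from $Z$ not containing $Y$---and your fix (instantiating ``as evenly as possible'' so that $Y$, having fewer than $b^{1+\card_b(Y)}$ points, sits inside a single cluster of size at least $b^{2+\card_b(Y)}$) is a legitimate and harmless refinement of the same argument.
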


\begin{proof}
Consider $x\in Z\subset Y_p$, $p < \card_b(Y)-2$. 
There is a pair $(Y,Y')\in\mclusters$ such that $p = \card_b(Y') < \card_b(Y)-2$,
and $x\in Y'$. Moreover, there is $Y''\in\pclusters$ such that $Y'\subset Y''$ and $\card_b(Y'') = p$
and $\alpha_x = \max_{y\in Y''} \alpha_y$. By the definition of $\card_b$, we have that
$|Y''| < b^{1+p}$. Therefore, $\alpha_x > \frac{\lambda}{b^{1+p}}$. By Lemma~\ref{lm: cardinality p},
$|Z|\ge b^{2+p}$, hence the conclusion. 

A similar argument applies to $Z\subset Y_{\max}$, assuming that $|Y_{\max}|\ge b^{2+\card_b(Y)}$.
In this case, if $Z\supset Y$ then we have $\sum_{x\in Y} \alpha_x\ge \lambda$. As $|Y| < b^{1+\card_b(Y)}$,
$Z$ also contains more than  $b^{2+\card_b(Y)} - b^{1+\card_b(Y)} = b^{2+\card_b(Y)}\cdot\left(1 - \frac 1 b\right)$
points from pairs $(Y,Y')\in\mclusters$, $Y'\ne Y$. By the argument for $Y_p$, for each such point $x$ we have
$\alpha_x > \frac{\lambda}{b^{1+\card_b(Y')}} > \frac{\lambda}{b^{1+\card_b(Y)}}$. Overall, we get that
$\sum_{x\in Z} \alpha_x > \lambda + b^{2+\card_b(Y)}\cdot\left(1 - \frac 1 b\right)\cdot \frac{\lambda}{b^{1+\card_b(Y)}}
= b\cdot\lambda$. If $Z$ does not contain $Y$, then the argument for $Y_p$ holds.
\end{proof}

\begin{lemma}\label{lm: cluster cost}
For every $Z\in\clusters$, $\cost(Z)\le 2b^2\cdot\sum_{x\in Z} d(x,Z)$.
\end{lemma}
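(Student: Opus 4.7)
The plan is to unfold both sides using Proposition~\ref{pr: reduction to cap} and the definition of $d(x,Z)$, and then bound the two resulting factors separately.

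First, I would write $\cost(Z) = |Z|\cdot\sum_{x\in Z} \|x-\cm(Z)\|_2^2$ by Proposition~\ref{pr: reduction to cap}(2). Since $\cm(Z)$ is, by Proposition~\ref{pr: reduction to cap}(1), the minimizer of $\sum_{x\in Z}\|x-y\|_2^2$ over $y\in\RR^d$, we may replace $\cm(Z)$ by $\ctr(Z)$ at no loss, obtaining
\[
\cost(Z)\;\le\;|Z|\cdot\sum_{x\in Z}\|x-\ctr(Z)\|_2^2.
\]
On the other hand, recalling the implicit setting $\ctr(Z)=\ctr(Y)$ from the paragraph preceding the lemma, we have $\sum_{x\in Z} d(x,Z)=b^{\card_b(Z)}\cdot\sum_{x\in Z}\|x-\ctr(Z)\|_2^2$. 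Thus it suffices to show that $|Z|\le 2b^{2+\card_b(Z)}$.

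The bound on $|Z|$ splits into exactly the two cases in the definition of $\card_b(Z)$ given just before the lemma. If $Z\subset Y_{\max}$, then $\card_b(Z)=\card_b(Y)=:p$ and Lemma~\ref{lm: cardinality max} gives $|Z|\le 2b^{2+p}$. If instead $Z\subset Y_p$ with $p<\card_b(Y)-2$, then $\card_b(Z)=p$ and Lemma~\ref{lm: cardinality p} gives $|Z|<2b^{2+p}$ (we are in the case where at least one cluster is opened, for otherwise $Z$ would not have been created). In either case $|Z|\le 2b^{2+\card_b(Z)}$, which multiplied by the previous inequality yields $\cost(Z)\le 2b^{2+\card_b(Z)}\cdot\sum_{x\in Z}\|x-\ctr(Z)\|_2^2 = 2b^2\cdot\sum_{x\in Z} d(x,Z)$.

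There is no real obstacle here; the lemma is essentially bookkeeping that assembles Proposition~\ref{pr: reduction to cap} together with the cardinality bounds from Phase~3. The only thing to be careful about is to use the correct case of the $\card_b(Z)$ definition so that the power $b^{2+\card_b(Z)}$ matches the upper bound on $|Z|$ supplied by the corresponding lemma (Lemma~\ref{lm: cardinality max} or Lemma~\ref{lm: cardinality p}), and to notice that comparing $\cm(Z)$ to $\ctr(Z)$ goes in the right direction because $\cm(Z)$ is the unique minimizer of the sum of squared distances.
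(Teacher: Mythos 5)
Your proof is correct and follows essentially the same route as the paper: bound $\cost(Z)$ by $|Z|\cdot\sum_{x\in Z}\|x-\ctr(Y)\|_2^2$ via the optimality of the center of mass, then invoke Lemmas~\ref{lm: cardinality max} and~\ref{lm: cardinality p} to get $|Z|\le 2b^{2+\card_b(Z)}$. Your version is in fact slightly more careful than the paper's terse proof (which states the cardinality bound with a typographical slip), so nothing further is needed.
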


\begin{proof}
We have $\cost(Z) = |Z|\cdot\sum_{x\in Z} \|x - \cm(Z)\|_2^2 \le |Z|\cdot\sum_{x\in Z} \|x - \ctr(Y)\|_2^2$.
By Lemmas~\ref{lm: cardinality max} and~\ref{lm: cardinality p},
$|Z|\le 2b^2\cdot\card_b(Z)$.
\end{proof}

\begin{proofof}{Theorem~\ref{thm: primal-dual}}
First, consider the feasibility of $(\alpha,\gamma)$. Clearly, $\gamma$ is set in the theorem to
satisfy the constraints that include it. Regarding the constraints that involve only $\alpha$, we
prove that they are satisfied throughout the execution of {\bf procedure} {\sc PrimalDualPhase1}.
The proof is by induction on the number of inactive points. Clearly, the initial $\alpha$ is feasible.
Now, suppose that $\alpha$ is feasible for some number of inactive points, and consider the
next step when this number increases and a set $A\subset \activ$ is removed from $\activ$
(we will use $\activ$ here to denote the set before the removal of $A$). Let $\alpha'$ denote
the values of the dual variables just before $A$ is removed from $\activ$. If there exist 
$Y\subset X$ and $y\in Y$ such that the constraint for the pair $Y,y$ is violated, then clearly 
$Y\cap \activ\ne\emptyset$, otherwise the same constraint would have been violated by the
solution $\alpha$, as $\alpha$ and $\alpha'$ differ only on $\activ$. But then there is some 
intermediate value $\alpha''$ such that $\alpha''_x = \alpha_x$ for all $x\not\in\activ$ and
$\alpha_x \le \alpha''_x < \alpha'_x$ for all $x\in \activ$, which causes this constraint (or
another constraint involving active points) to become tight. Therefore, at least one point
would have been removed from $\activ$ before we reach the values $\alpha'$, 
in contradiction with our assumptions.

Next, consider the number of points clustered in the output $\clusters$ of {\bf procedure}
{\sc PrimalDual}. Clearly, {\bf procedure} {\sc PrimalDualPhase2} clusters in $\mclusters$
exactly $n'$ points. Some of these points are discarded by {\bf procedure} {\sc PrimalDualPhase3}.
Consider some $(Y,\cdot)\in\mclusters$. By Lemma~\ref{lm: cardinality p}, the number of
points discarded from these clusters is less than
$$
\sum_{p < \card_b(Y)-2} b^{2+p} = \frac{b^{\card_b(Y)} - b^2}{b-1}.
$$
On the other hand, all the points in $Y_{\max}$ are clustered in $\clusters$, as $Y\ne Y_{\last}$
in this case. Clearly, the number of points in $Y_{\max}$ is at least $|Y|\ge b^{\card_b(Y)}$. Thus,
less than $\frac{1}{b-1}\cdot n' \le \eps\cdot n'$ points get discarded.

Finally, let's consider the cost of the clustering. Let $Z\in\clusters$ be a cluster that satisfies
the conditions of Lemma~\ref{lm: opening cost}. Then, 
\begin{eqnarray*}
\sum_{x\in Z} \alpha_x - \lambda & > & \frac{b-1}{b}\cdot \sum_{x\in Z} \alpha_x \\
  & \ge & \frac{b-1}{9b}\cdot \sum_{x\in Z} d(x,Z) \\
  & \ge & \frac{b-1}{18b^3}\cdot\cost(Z),
\end{eqnarray*}
where the first inequality follows from Lemma~\ref{lm: opening cost}, the second inequality
follows from Lemma~\ref{lm: connection cost}, and the third inequality follows from
Lemma~\ref{lm: cluster cost}. The remaining clusters are sets $Y_{\max}$ with
$|Y_{\max}| < b^{2+\card_b(Y)}$ and a subset of $Y_{\last}$. Consider a cluster
$Y_{\max}\in\clusters$. We have that 
$$
\sum_{x\in Y_{\max}} \alpha_x - \lambda = \sum_{x\in Y} \alpha_x - \lambda + \sum_{x\in Y_{\max}\setminus Y} \alpha_x \ge 
\sum_{x\in Y} d(x,Y) + \frac{1}{9}\cdot\sum_{x\in Y_{\max}\setminus Y} d(x,Y)\ge \frac{1}{18b^2}\cdot \cost(Y_{\max}).
$$
Finally, if there's a cluster $Z\subset Y_{\last}$, then 
$$
\sum_{x\in Z} \alpha_x \ge \frac{1}{9}\cdot\sum_{x\in Z} d(x,Y_{\max})\ge \frac{1}{18b^2}\cdot \cost(Z).
$$
Thus, we can set $c = c_{\eps} = \frac{18b^3}{b-1} = \frac{18(1+\eps)^3}{\eps^2}$.
\end{proofof}


\newpage

\begin{algorithm}[H]
\begin{algorithmic}[1]
{\scriptsize
\Procedure{PrimalDual}{$X,\lambda,n',b$} 
\State $\alpha, \pclusters, Y_{\last} \gets${\sc PrimalDualPhase1}($X,\lambda,n',b$)
\State $\mclusters\gets${\sc PrimalDualPhase2}($X,n',b,\alpha,\pclusters,Y_{\last}$)
\State $\clusters\gets${\sc PrimalDualPhase3}($X,b,\mclusters$)
\State \Return $\clusters$
\EndProcedure
\State
%
\Procedure{PrimalDualPhase1}{$X,\lambda,n',b$} 
\State $\activ, \pclusters\gets X, \emptyset$
\State $\alpha_x\gets 0$ for all $x\in X$
\While{$|\activ| > n - n'$}
\State raise $\alpha_x$ at a uniform rate for all $x\in\activ$\Comment{stop raising when one of the following two cases happens}
\If{$\exists x\in\activ$ and $Y\in \pclusters$ such that $\alpha_x\ge d(x,Y)$}
\State $\card_b(Y\cup\{x\}),\ctr(Y\cup\{x\})\gets\card_b(Y),\ctr(Y)$
\State $\pclusters,\activ\gets\pclusters\setminus \{Y\}\cup\{Y\cup\{x\}\},\activ\setminus\{x\}$
\ElsIf{$\exists Y\subset X$ and $y\in Y$ such that $Y\cap\activ\ne\emptyset$ the dual constraint for $Y,y$ is tight}
\State $\card_b(Y), \ctr(Y)\gets\log_b\floor_b(|Y|), y$
\If{$|\activ\setminus Y| < n-n'$} \Comment{use an inclusion-wise minimal such $Y$}
\State return $\alpha, \pclusters, Y$
\Else
\State $\pclusters,\activ\gets \pclusters\cup\{Y\},\activ\setminus Y$
\EndIf
\EndIf
\EndWhile
\State \Return $\alpha, \pclusters, \emptyset$
\EndProcedure	
\State
%
\Procedure{PrimalDualPhase2}{$X,n',b,\alpha,\pclusters,Y_{\last}$}
\State $\activ, \mclusters\gets \{x\in X:\ x\in Y\in \pclusters\vee x\in Y_{\last}\}, \emptyset$
\For{$Y\in \pclusters$, by order of nonicreasing $\card_b(Y)$}
\If{$\exists (Y',Y')\in \mclusters$ with $x\in Y\cap Y'$ and $\alpha_x > \max\left\{d(x,Y),d(x,Y')\right\}$}
\State $Y'',\card_b(Y''),\ctr(Y'')\gets \{x\in Y\cap\activ:\ \alpha_x = \max_{y\in Y} \alpha_y\}, \card_b(Y), \ctr(Y')$
\State $\mclusters\gets\mclusters\cup \{(Y',Y'')\}$
\State $\activ\gets\activ\setminus Y''$
\Else
\State remove each $x\in Y$ from any $Y''\ni x$ with $(Y',Y'')\in\mclusters$
     \Comment{$\alpha_x\le d(x,Y'')$; $\card_b(Y''),\ctr(Y'')$ don't change}
\State $\mclusters\gets\mclusters\cup \{(Y,Y)\}$
\State $\activ\gets\activ\setminus Y$
\EndIf
\EndFor
\State $Y,\card_b(Y),\ctr(Y)\gets \{|\activ| - n + n'$ points in $Y_{\last}\cap\activ\}, \card_b(Y_{\last}), \ctr(Y_{\last})$
\If{$Y\ne\emptyset$}
\State $\mclusters\gets\mclusters\cup\{(Y_{\last},Y)\}$
\EndIf
\State \Return $\mclusters$
\EndProcedure	
\State

%
\Procedure{PrimalDualPhase3}{$X,b,\mclusters$} 
\State $\clusters\gets\emptyset$
\For{$(Y,\cdot)\in\mclusters$}
\State $Y_{\max}\gets\{x\in X:\ \exists Y'\ni x \hbox{ such that } (Y,Y')\in\mclusters\wedge \card_b(Y')\ge\card_b(Y)-2\}$
\State $\clusters\gets\clusters\cup$ 
        {\sc Partition}($Y_{\max},\max\{1,\left\lfloor |Y_{\max}| / b^{2+\card_b(Y)}\right\rfloor\}$)
\For{$p < \card_b(Y)-2$}
\State $Y_p\gets\{x\in X:\ \exists Y'\ni x \hbox{ such that } (Y,Y')\in\mclusters\wedge\card_b(Y')=p\}$
\If{$|Y_p|\ge b^{2+p}$}
\State $\clusters\gets\clusters\cup$ {\sc Partition}($Y_p,\left\lfloor |Y_p| / b^{2+p}\right\rfloor$)
\EndIf
\EndFor
\EndFor
\State \Return $\clusters$
\EndProcedure
\State
\Procedure{Partition}{$S,m$}\Comment{$m\ge 1$}
\State partition $S$ as evenly as possible into $m$ disjoint subsets $S_1,S_2,\dots,S_m$
\State return $\{S_1,S_2,\dots,S_m\}$
\EndProcedure

}
\end{algorithmic} \caption{Algorithm \textsc{Primal-Dual}} \label{fig: alg}
\end{algorithm}

\newpage

\begin{algorithm}[H]
\begin{algorithmic}[1]
\Procedure{MinSumClustering}{$X,k,n',\eps$} 
\State $\lambda_1\gets 0$, $\lambda_2\gets \sum_{x,y\in X} \|x-y\|_2^2$
\State $\clusters_1\gets \{\{x\}:\ x\in X\}$, $\clusters_2\gets X$
\State $b\gets\frac{1+\eps}{\eps}$
\State $\delta\gets \frac{2}{(n+k)\lambda_2}$\Comment{we need $\delta\le \frac{2}{(n+k)\opt(X,n',k)}$}
\While{$\lambda_2-\lambda_1 > \delta$}
\State $\lambda = \frac 1 2\cdot(\lambda_1+\lambda_2)$
\State $\clusters\gets\hbox{PrimalDual}(X,\lambda,n',b)$
\State $Y_{\min,\lambda}\gets$ smallest cluster in $\clusters$
\State $k'\gets |\clusters|-1$
\If{$|Y_{\min,\lambda}|\le\frac{\eps}{3}\cdot n'$}
\State $\clusters\gets\clusters\setminus \{Y_{\min,\lambda}\}$
\EndIf
\If{$k' > k$}
\State $\lambda_1,\clusters_1,k_1\gets\lambda,\clusters,k'$
\Else \Comment{$k'\le k$}
\State $\lambda_2,\clusters_2,k_2\gets\lambda,\clusters,k'$
\EndIf
\EndWhile
\State $\rho_1\gets \frac{k-k_2}{k_1-k_2}$    \Comment{$k_1 > k\ge k_2\ge 0$}
\If{$\rho_1 \ge 1 - \frac{\eps}{4}$}
\State return $\{k$ largest sets in $\clusters_1\}$
\Else
\State \Return $\clusters_2$ \Comment{If $|\clusters_2| < k$, split clusters arbitrarily to get exactly $k$}
\EndIf
\EndProcedure
\end{algorithmic} \caption{Algorithm \textsc{Min-Sum-Clustering}} \label{fig: alg2}
\end{algorithm}

\newpage

\bibliographystyle{alpha}
\bibliography{ref}

\end{document}